\crefname{figure}{Fig.}{Figs.} 
\Crefname{figure}{Fig.}{Figs.} 
\crefname{table}{Tab.}{Tabs.}  
\Crefname{table}{Tab.}{Tabs.} 
\crefname{equation}{Eq.}{Eqs.} 
\Crefname{equation}{Eq.}{Eqs.} 
\title{Future Sight and Tough Fights: Revolutionizing Sequential Recommendation with FENRec}
\author{
    Yu-Hsuan Huang\textsuperscript{\rm 1},
    Ling Lo\textsuperscript{\rm 1},
    Hongxia Xie\textsuperscript{\rm 2},
    Hong-Han Shuai\textsuperscript{\rm 1},
    Wen-Huang Cheng\textsuperscript{\rm 3}
}
\begin{document}

\maketitle

\begin{abstract}

Sequential recommendation (SR) systems predict user preferences by analyzing time-ordered interaction sequences. A common challenge for SR is data sparsity, as users typically interact with only a limited number of items. 
While contrastive learning has been employed in previous approaches to address the challenges, these methods often adopt binary labels, missing finer patterns and overlooking detailed information in subsequent behaviors of users. Additionally, they rely on random sampling to select negatives in contrastive learning, which may not yield sufficiently hard negatives during later training stages.
In this paper, we propose Future data utilization with Enduring Negatives for contrastive learning in sequential Recommendation (FENRec). Our approach aims to leverage future data with time-dependent soft labels and generate enduring hard negatives from existing data, thereby enhancing the effectiveness in tackling data sparsity. Experiment results demonstrate our state-of-the-art performance across four benchmark datasets, with an average improvement of 6.16\% across all metrics. 
\end{abstract}

\begin{links}
    \link{Code}{\url{https://github.com/uikdwnd/FENRec}}
\end{links}

\section{Introduction}

Recommendation systems enhance user experience by providing personalized suggestions tailored to individual preferences. Given their wide applications in online shopping, streaming services, and social media, extensive research has focused on optimizing their performance and effectiveness~\cite{lu2015recommender, sharma2024survey, alamdari2020systematic}. Among them, Sequential Recommendation (SR) stands out for its ability to capture and utilize the temporal dynamics of user behavior~\cite{yu2019multi, xie2022cl4srec, chen2022iclrec, qiu2022duorec}.

Despite the strength of SR, they still face the common challenge of recommendation systems, the limited nature of user-item interaction. Usually, users only interact with a small subset of all available items. The sparsity in interaction data can impede the training of SR, making them struggle to learn meaningful 
preferences, which can affect the accuracy and the relevance of the generated recommendations. 
\begin{figure}[htbp]  
    \centering  
    \includegraphics[width=0.99\columnwidth]{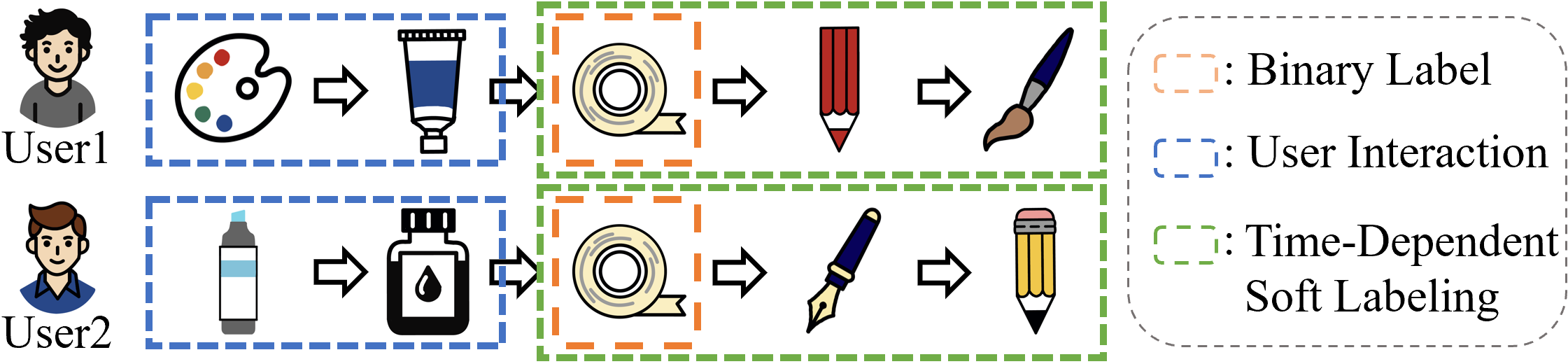}  
    \caption{An illustration of binary labels compared to the Time-Dependent Soft Labeling we propose.}  
    \label{fig:future_interaction}  
\end{figure}
To address the data sparsity issue, previous methods have proposed incorporating contrastive learning, which leverages self-supervised signals to improve model performance~\cite{ xie2022cl4srec, chen2022iclrec, qiu2022duorec}. These methods create positive pairs through augmentation strategies and maximize their agreement. For instance, CL4SRec~\cite{xie2022cl4srec} uses data augmentations such as masking and cropping to generate positive pairs. DuoRec~\cite{qiu2022duorec} utilizes model-level augmentation and regards user sequence with the same next item as augmented positive view.

While contrastive learning has proven effective, we contend that sparse data can be used more efficiently within these methods due to two reasons. First, they use only the last item in sequence as a binary
label and overlook the upcoming interactions. Second, they adopt random sampling for selecting negatives in contrastive learning, which may not yield sufficiently challenging samples later in training, reducing their effectiveness in addressing data sparsity issue. Therefore, to better utilize sparse data, we propose \textbf{F}uture data utilization with \textbf{E}nduring \textbf{N}egatives for contrastive learning in sequential \textbf{Rec}ommendation (FENRec) including two components: Time-Dependent Soft Labeling and Enduring Hard Negatives Incorporation. For the first issue, we propose Time-Dependent Soft Labeling akin to using future interactions as labels, even though these interactions are rooted in the past.
Our method traces back previous subsequences to generate labels based on interactions resembling potential future events, capturing finer patterns in user behavior. Unlike previous methods that assign identical binary labels to the two user sequences in \cref{fig:future_interaction}, our approach incorporates subsequent events, allowing the model to capture more detailed insight in user behavior.
Additionally, to address the second issue, we introduce
Enduring Hard Negatives Incorporation. We generate hard negatives by mixing anchors with negatives throughout training, ensuring they remain consistently challenging.
\begin{figure}[htbp]
    \centering
    \begin{minipage}[b]{0.325\columnwidth}
        \centering
        \includegraphics[width=\textwidth]{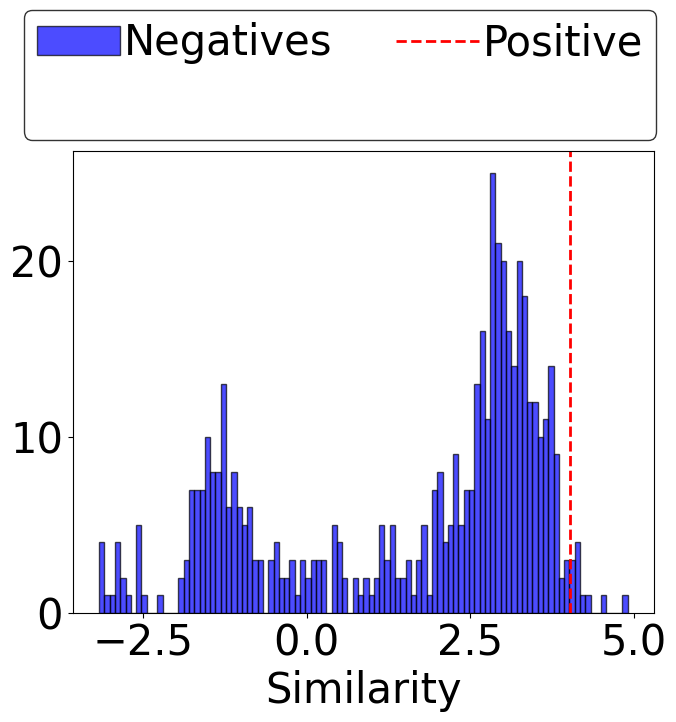} 
        \subcaption{epoch 5}\label{fig:image1}
    \end{minipage}
    \hfill
    \begin{minipage}[b]{0.325\columnwidth}
        \centering
        \includegraphics[width=\textwidth]{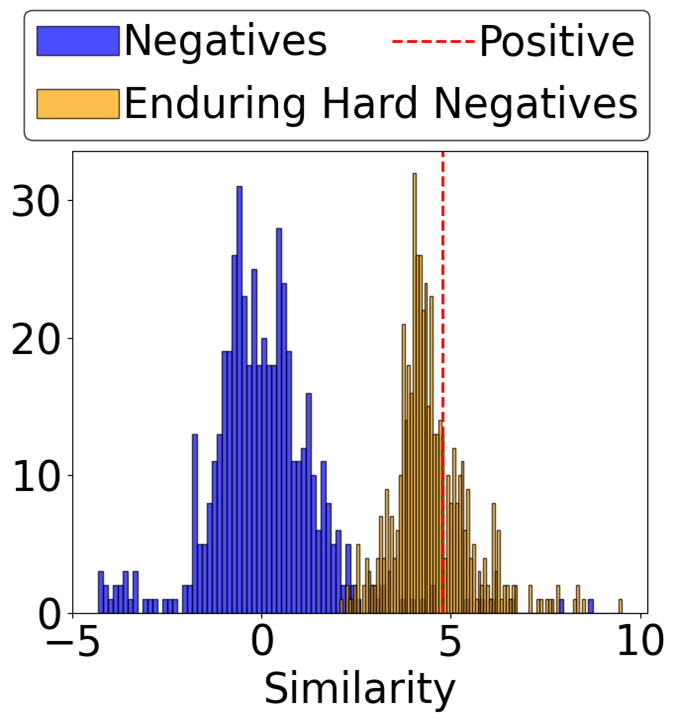} 
        \subcaption{epoch 25}\label{fig:image2}
    \end{minipage}
    \hfill
    \begin{minipage}[b]{0.325\columnwidth}
        \centering
        \includegraphics[width=\textwidth]{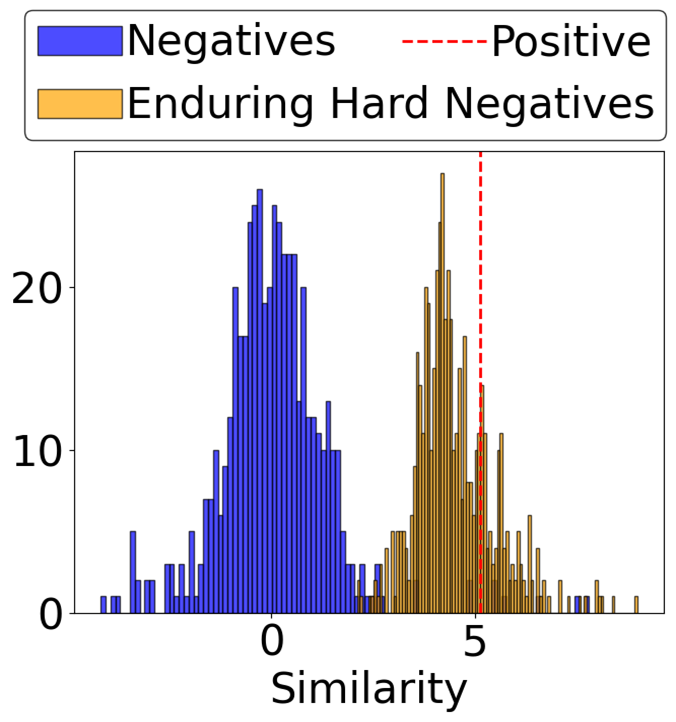} 
        \subcaption{epoch 50}\label{fig:image3}
    \end{minipage}
    \caption{The distribution of similarity between samples and the anchor sample from a random sample batch at epoch 5, 25, and 50 on the Beauty dataset. Our enduring hard negatives are incorporated following a 20-epoch warm-up period. Similarity is calculated using the inner product. }\label{fig:previous_method_sim}
\end{figure}
\cref{fig:previous_method_sim} shows how the similarity between original negatives and the anchor declines over time, potentially hindering the model's ability to learn discriminative features. In contrast, our hard negatives maintain higher similarity with the anchor, enhancing the model's ability to differentiate user preferences.

The contributions of this work are summarized below:
\begin{itemize}

\item We propose Time-Dependent Soft Labeling to leverage the entire user interaction sequence, generating labels based on interactions that resemble future events.

\item We introduce Enduring Hard Negatives Incorporation to enhance contrastive learning. By generating enduring hard negatives, our method improves the ability of the model to differentiate complex samples.

\item Extensive experiments on four benchmark datasets demonstrate the state-of-the-art performance of our FENRec model, which comprises Time-Dependent Soft Labeling and Enduring Hard Negatives Incorporation.

\end{itemize}

\section{Related Work}

\subsection{Sequential Recommendation}

Sequential recommendation (SR) predicts future user interactions based on past sequences. Recent SR models have improved performance but still face data sparsity issues due to limited user interactions~\cite{tang2018caser, hidasi2015gru4rec, kang2018sasrec, yue2024lrurec, shin2024bsarec}. To address this, some methods use contrastive learning techniques~\cite{xie2022cl4srec, du2022CBiT, liu2021coserec, chen2022iclrec, qiu2022duorec, qin2024icsrec}, introducing self-supervised signals to more effectively utilize the available data for better learning. These approaches use augmentation techniques to generate positive pairs and maximize their agreement. CL4Rec~\cite{xie2022cl4srec} uses data augmentations like item masking to produce positive pairs, ICLRec~\cite{chen2022iclrec} employs clustering to align intent and user representations, DuoRec~\cite{qiu2022duorec} leverages both supervised and unsupervised contrastive learning, and ICSRec~\cite{qin2024icsrec} utilizes intent within subsequences for contrastive learning. However, these contrastive learning methods fail to utilize future interactions and adopt random sampling. Therefore, sparse data may not be fully utilized by these methods.

\subsection{Soft Label for Recommendation Systems}

Recommender systems often rely on one-hot labels, which overlook the ambiguity of unobserved feedback and can lead to poor generalization. To address this, some methods use soft labels to capture user preferences more accurately~\cite{cheng2021softrec, zhou2023MVS, wu2023CSRec}. Soft labels help the model to learn finer-grained user preferences. SoftRec~\cite{cheng2021softrec} uses item, user, and model strategies to create soft labels, MVS~\cite{zhou2023MVS} generates smoothed contexts with a complementary model, and 
CSRec~\cite{wu2023CSRec} employs model-, data-, or training-level teachers to generate confident soft labels for SR. Unlike previous methods, we generate soft labels by leveraging upcoming interactions through a simple yet effective approach that requires no additional modules.

\subsection{Hard Negative Mining in Deep Metrics Learning}

Hard negatives have proved to be helpful in various domains~\cite{suh2019stochastic, xuan2020hard, zhan2021optimizing, robinson2021contrastive}. 
Due to the advantage of using hard negatives, Mochi~\cite{kalantidis2020hard} creates synthetic hard negatives by mixing the hardest negatives. MixCSE~\cite{zhang2022unsupervised} mix positive and negative samples to generate hard negatives. Despite the advantages of hard negatives and the prevalence of contrastive learning in SR, integrating hard negatives into these frameworks remains underexplored.

\section{Preliminaries}
\begin{figure*}[htbp]  
    \centering  
    \includegraphics[width=0.8\textwidth]{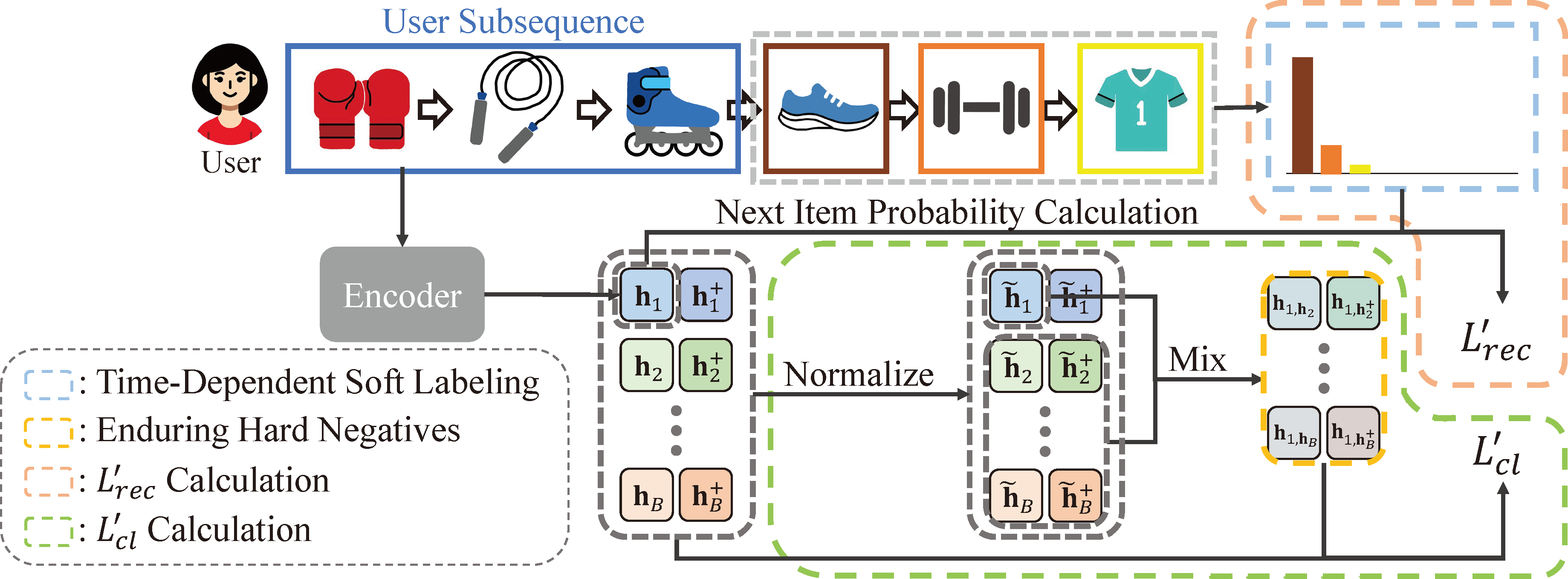}  
    \caption{The framework of our method, FENRec, the user sequences will first be split into subsequences and encoded into representations, while soft labels are generated based on the subsequences. Next, the enduring hard negatives are produced and incorporated into contrastive learning framework. Finally, $L_{rec}^{\prime}$ and $L_{cl}^{\prime}$ are calculated.}  
    \label{fig:framework}  
\end{figure*}

\subsection{Sequential Recommendation}
Sequential recommendation (SR) aims to predict the next item a user will interact with by analyzing their historical interaction sequences. 
Denote the set of items as $\mathcal{V}=\left\{v_1, v_2 \dots, v_N\right\}$ and the set of users as $\mathcal{U}$. The past interactions of each user $u \in \mathcal{U}$ can be arranged chronologically, forming a user interaction sequence $\mathcal{S}^{u}  = [v^{u}_1, v^{u}_2, \dots, v^{u}_{|\mathcal{S}^{u}|}]$, where $v^{u}_i$ represents the $i$-th interacted item in the sequence and $|\mathcal{S}^{u}|$ is the length of $\mathcal{S}^{u}$. Given a user interaction sequence $\mathcal{S}^{u}$, the goal of SR is to predict the item the user $u$ is most likely to interact with at time step $|\mathcal{S}^{u}| + 1$. It can be formulated as follows:
\begin{align}
    \mathop{\arg\max}_{v \in \mathcal{V}}\mathbb{P}(v^{u}_{|\mathcal{S}^{u}|+1}=v \mid \mathcal{S}^{u}).
\label{eq:problem_formulation}
\end{align}

To effectively train models to achieve this goal, SR methods often divide the original sequence into multiple subsequences to enrich training data and train the model using them. Formally, given the interaction sequence $\mathcal{S}^u$ of user $u$, the subsequences can be constructed as follows:
\begin{align}
\mathcal{DS}(\mathcal{S}^u) = \left\{ [v^{u}_1, v^{u}_2, \dots, v^{u}_t] \mid 1 \leq t \leq |\mathcal{S}^{u}| \right\}.
\end{align}
During the training phase, the next item of each subsequence is regarded as the target item for prediction. 

After dividing user sequences, SR model is trained by calculating the probability of each item being the next of the subsequence and computing the cross-entropy loss based on this probability. Specifically, current SR systems first encode the interaction sequences into user representations using deep neural networks (e.g., RNN or transformer) and calculate the probability of the next item based on it~\cite{hidasi2015gru4rec, kang2018sasrec}.
Formally, a sequence encoder $f_{\theta}(\cdot)$ first encodes the interaction sequence $\mathcal{S}^{us} \in \mathcal{DS}(\mathcal{S}^u)$ and generates the user representation $\mathbf{h}_{us}$, denoted as $\mathbf{h}_{us} = f_{\theta}(\mathcal{S}^{us})$. 
Then, the probability that the next item of sequence $\mathcal{S}^{us}$ is item $v_j$ is given by:
\begin{align}
    \hat{\mathbf{y}}^{us}_j = softmax(\mathbf{h}_{us}^T \mathbf{v}_{j}),
\end{align}
where $\mathbf{v}_{j}$ denotes the item embedding of the item $v_j$, and the cross-entropy loss is calculated as follows:
\begin{align}
    L_{rec} = -\sum_{j=1}^{N} \mathbf{y}^{us}_j \log \hat{\mathbf{y}}^{us}_j
\label{eq:next_item_ce}.
\end{align}
$N$ represents the total number of items. $\mathbf{y}^{us}_j$ denotes the true label of the interaction sequence $\mathcal{S}^{us}$ for $v_j$.
If $v_i$ is the target item of $\mathcal{S}^{us}$ (i.e., $v^{u}_{|\mathcal{S}^{us}|+1}$), then $\mathbf{y}^{us}_i = 1$ and $\mathbf{y}^{us}_j = 0$ for all other items (i.e., $j \neq i$). Minimizing the cross-entropy loss is equivalent to optimizing SR system with \cref{eq:problem_formulation}.

\subsection{Contrastive Learning in SR}
One challenge in SR is data sparsity. Typically, users interact with only a small fraction of available items, leading to sparse history data. 
Since the sparsity can significantly hinder the learning of deep neural networks, recent works adopt contrastive learning to improve the learned features~\cite{ chen2022iclrec, qiu2022duorec}.
Generally, given an anchor sample\footnote{The anchor sample can be either the original representation or an augmented version from techniques like masking or cropping.} $\mathbf{h}_{i}$ and the positive view of the anchor $\mathbf{h}_{i}^{+}$ generated by augmentation strategies\footnote{Augmentation strategies such as model-level (e.g. dropout) or data-level augmentation (e.g. masking or cropping).}, other samples within the batch are considered negative views, and the contrastive loss can be formulated as follows:
\begin{align}
    L_{cl}(\mathbf{h}_{i}, \mathbf{h}_{i}^{+}) = -\log \frac{e^{(\mathbf{h}_{i}^T \mathbf{h}_{i}^{+}) / \tau_1}}{e^{(\mathbf{h}_{i}^T \mathbf{h}_{i}^{+}) / \tau_1} + \sum_{\mathbf{n} \in \mathbf{H}_i} e^{(\mathbf{h}_{i}^T \mathbf{n}) / \tau_1}},
\label{eq:infonce}
\end{align}
where $\mathbf{H}_i = \left\{ \mathbf{h}_j, \mathbf{h}_j^+ \mid j \neq i, 0 < j \leq B \right\}$, and $B$ denotes the batch size.  $\tau_1$ is the temperature parameter.

\section{Method}

\cref{fig:framework} shows the overall framework of our method, \textbf{F}uture data utilization with \textbf{E}nduring \textbf{N}egatives for contrastive learning in sequential \textbf{Rec}ommendation, FENRec. 
To enhance contrastive learning in SR system, we adopt a comprehensive strategy. 

First, we divide user interaction sequences into subsequences and encode them into user representations. To leverage future interactions, we generate time-dependent soft labels that assign probabilities to not only the next item but also the future interactions, as depicted in the top half of \cref{fig:framework}. These soft labels are then employed to calculate the revised cross-entropy loss $L_{rec}^{\prime}$, allowing the model to consider finer patterns in users' upcoming behaviors. 

Next, we use user representations to generate enduring hard negatives, as shown in the bottom half of \cref{fig:framework}. This process consistently produces enduring hard negatives that are more similar to the anchor than the original negatives, ensuring that the model continues to encounter challenging samples throughout the training process, thereby improving its ability to differentiate user preferences. After generating these enduring hard negatives, the revised contrastive loss $L_{cl}^{\prime}$ is then calculated using them. Finally, the total loss is computed by considering both $L_{rec}^{\prime}$ and $L_{cl}^{\prime}$.

\subsection{Future Interaction Utilization}

By tracing back previous subsequences, the future interactions of these subsequences are identified. However, most existing SR models often overlook this valuable information. To fully leverage interaction data, we propose generating soft labels based on forthcoming interactions. 

\subsubsection{Time-Dependent Soft Labeling.}
Previous methods fail to leverage forthcoming interactions effectively since they use binary labels and focus solely on predicting the immediate next item. Therefore, to better utilize future data, we generate time-dependent soft labels by considering items the user will interact with in the near future (i.e., the second and third subsequent items) as items of interest. Items closer in time to the next item are assigned with higher probabilities. 

We define the time-dependent soft label as:

\begin{align}
{\mathbf{y}_i^{us}}^{\prime} =
\begin{cases}
p(i) & \text{if } v_i \in \{v^{u}_{|\mathcal{S}^{us}| + 1}, \ldots, \\ &\phantom{\text{if } v_i \in \{} v^{u}_{min(|\mathcal{S}^{us}| + 3, |\mathcal{S}^{u}| + 1)}\} \\
0 & \text{otherwise}
\end{cases},
\end{align}
where
\begin{align}
p(i) = \frac{\gamma^{(pos(v_i) - (|\mathcal{S}^{us}| + 1))}}{\sum_{j=|\mathcal{S}^{us}| + 1}^{min(|\mathcal{S}^{us}| + 3, |\mathcal{S}^{u}| + 1)} \gamma^{(j - (|\mathcal{S}^{us}| + 1))}}. 
\end{align}
$pos(v_i)$ is the position of $v_i$ in the user sequence $\mathcal{S}^{u}$ (i.e., $pos(v_i)=|\mathcal{S}^{us}| + 1$, if $v^u_{|\mathcal{S}^{us}| + 1} = v_i$). $\mathcal{S}^{us} \in \mathcal{DS}(\mathcal{S}^u)$ 
represents the subsequence derived from the original user sequence, and $|\mathcal{S}^{us}|$ is the length of the sequence $\mathcal{S}^{us}$. 
$\gamma$ is a smoothing hyperparameter that controls the distribution of time-dependent soft labels. A higher value of $\gamma$ results in a smoother probability distribution of soft labels, indicating greater uncertainty. Conversely, a lower value of $\gamma$ leads to a more concentrated distribution, suggesting more confident predictions. We use time-dependent soft labels to replace the original binary labels. Hence, \cref{eq:next_item_ce} can be altered as:
\begin{align}
    L_{rec}^{\prime} = -\sum_{i=1}^{N} {\mathbf{y}_i^{us}}^{\prime} \log \hat{\mathbf{y}}_i^{us}.
\label{eq:soft_next_item_ce}
\end{align}

\subsection{Enduring Hard Negatives Incorporation}

Previous contrastive learning methods in SR often fail to provide challenging negatives in later training stages, limiting their effectiveness in alleviating data sparsity. Therefore, to better address data sparsity and enhance the model's ability to learn discriminative features, we propose incorporating enduring hard negatives during training. Specifically, We generate enduring hard negatives from existing data, and assign higher weights to them during training, ensuring that the model consistently learns from the challenging samples, thereby optimizing the use of sparse data.

\subsubsection{Enduring Hard Negatives.}

For contrastive learning in SR, we define hard negatives as user representations similar to the anchor sample but not its augmented view. Inspired by previous works~\cite{kalantidis2020hard, zhang2022unsupervised}, we proposed generating enduring hard negatives that remain challenging throughout training. As shown in the lower half of \cref{fig:framework}, we first normalize the user representations in the batch and mix the anchor with the negative samples to produce enduring hard negatives $\mathbf{h}^{-}_{i,\mathbf{n}}$. It can be formulated as follows:

\begin{align}
    \mathbf{\tilde{h}}_{u} = \frac{\mathbf{h}_{u}}{\|\mathbf{h}_{u}\|_2},
\end{align}

\begin{align}
\mathbf{h}^{-}_{i,\mathbf{n}} = \frac{\lambda \mathbf{\tilde{h}}_{i} + (1 - \lambda) \mathbf{\tilde{n}}}{\|\lambda \mathbf{\tilde{h}}_{i} + (1 - \lambda) \mathbf{\tilde{n}}\|_2} \cdot \|\mathbf{n}\|_2,
\end{align}
where $\mathbf{n} \in \mathbf{H}_i$.  $\lambda$ is a hyperparameter between 0 and 1 that controls the proportion of the anchor sample in the generated enduring hard negatives. We first normalize representations to ensure that anchor and negative samples contribute equally to the generated enduring hard negatives. Then, by mixing the anchor and negative samples, the enduring hard negatives maintain a consistently higher similarity to the anchor sample than the original negatives\footnote{For more details, interested readers can refer to Appendix.}. Finally, we divide the mixture of anchor and negative samples (i.e., $\lambda \mathbf{\tilde{h}}_{i} + (1 - \lambda) \mathbf{\tilde{n}}$) by their norm and multiply it by the norm of the negative samples, allowing enduring hard negatives to maintain their original norm range, ensuring their data distribution resembles that of the true user representations.

Therefore, the contrastive loss can be formulated as:
\begin{align}
L_{cl}(\mathbf{h}_{i}, \mathbf{h}_{i}^{+}) &= -\log \frac{e^{ \left(\mathbf{h}_{i}^T \mathbf{h}_{i}^{+}\right)/\tau_1 } }{C + \mu  \sum_{\mathbf{n} \in \mathbf{H}_i} e^{ \left(\mathbf{h}_{i}^T  \text{SG}(\mathbf{h}_{i, \mathbf{n}}^{-})\right)/\tau_1 }} ,
\label{eq:infonce_with_mix_neg}
\end{align}
where
\begin{align}
C &= e^{ \left(\mathbf{h}_{i}^T \mathbf{h}_{i}^{+}\right)/\tau_1 }  + \sum_{\mathbf{n} \in \mathbf{H}_i} e^{\left( \mathbf{h}_{i}^T \mathbf{n}\right)/\tau_1 }.
\label{eq:C_eq}
\end{align}
$\mu$ is a hyperparameter to control the portion of enduring hard negatives in the contrastive loss. The ``stop gradient,"‵ denoted by $\text{SG}(.)$, stops the gradient from propagating through the generated enduring hard negatives. This avoids incorrect backpropagation signals. Without $\text{SG}(.)$, the anchor sample would be affected by the gradients passed through the generated enduring hard negatives. Since these negatives partially contain the anchor, the anchor would be incorrectly pushed away from itself. 
\begin{table*}[t]
    \centering
    \footnotesize
    \resizebox{\textwidth}{!}{%
    \begin{tabular}{ll| ccccc|cccccccc|c|cc}
    \specialrule{1pt}{1pt}{2pt}
    Dataset & Metric & GRU4Rec & Caser & LRURec & SASRec & BSARec & BERT4Rec & MAERec & CBiT & CL4SRec & CoSeRec & ICLRec & DuoRec & ICSRec & MVS & FENRec & Improv.    \\
    \specialrule{1pt}{1pt}{2pt}

\multirow{6}{*}{Sports} 
 & HIT@5 & 0.0116 & 0.0123 & 0.0389 & 0.0189 & \underline{0.0400} & 0.0264 & 0.0285 & 0.0235 & 0.0235 & 0.0264 & 0.0271 & 0.0311 & 0.0388 & 0.0384 & \textbf{0.0431} & 7.75\% \\
 & HIT@10 & 0.0197 & 0.0210 & 0.0551 & 0.0307 & \underline{0.0583} & 0.0408 & 0.0435 & 0.0365 & 0.0375 & 0.0403 & 0.0422 & 0.0446 & 0.0551 & 0.0548 & \textbf{0.0621} & 6.52\% \\
 & HIT@20 & 0.0320 & 0.0336 & 0.0771 & 0.0491 & \underline{0.0830} & 0.0622 & 0.0645 & 0.0528 & 0.0575 & 0.0605 & 0.0632 & 0.0640 & 0.0767 & 0.0775 & \textbf{0.0890} & 7.23\% \\
 & NDCG@5 & 0.0074 & 0.0078 & 0.0276 & 0.0122 & \underline{0.0280} & 0.0175 & 0.0191 & 0.0157 & 0.0156 & 0.0177 & 0.0179 & 0.0220 & 0.0272 & 0.0268 & \textbf{0.0299} & 6.79\% \\
 & NDCG@10  & 0.0100 & 0.0105 & 0.0329 & 0.0161 & \underline{0.0339} & 0.0221 & 0.0239 & 0.0198 & 0.0201 & 0.0221 & 0.0227 & 0.0263 & 0.0324 & 0.0321 & \textbf{0.0361} & 6.49\% \\
 & NDCG@20  & 0.0131 & 0.0137 & 0.0384 & 0.0207 & \underline{0.0401} & 0.0275 & 0.0292 & 0.0239 & 0.0251 & 0.0272 & 0.0280 & 0.0312 & 0.0379 & 0.0378 & \textbf{0.0429} & 6.98\% \\

\specialrule{0.5pt}{1pt}{2pt} 

\multirow{6}{*}{Beauty} 
 & HIT@5 & 0.0188 & 0.0234 & 0.0671 & 0.0359 & \underline{0.0707} & 0.0489 & 0.0557 & 0.0612 & 0.0492 & 0.0459 & 0.0493 & 0.0560 & 0.0681 & 0.0691 & \textbf{0.0728} & 2.97\% \\
 & HIT@10 & 0.0315 & 0.0386 & 0.0928 & 0.0580 & \underline{0.0978} & 0.0735 & 0.0789 & 0.0871 & 0.0706 & 0.0696 & 0.0726 & 0.0800 & 0.0936 & 0.0961 & \textbf{0.1019} & 4.19\% \\
 & HIT@20 & 0.0516 & 0.0585 & 0.1257 & 0.0905 & \underline{0.1345} & 0.1065 & 0.1094 & 0.1202 & 0.0990 & 0.1020 & 0.1055 & 0.1088 & 0.1273 & 0.1305 & \textbf{0.1393} & 3.57\% \\
 & NDCG@5 & 0.0114 & 0.0148 & 0.0481 & 0.0233 & \underline{0.0503} & 0.0330 & 0.0397 & 0.0435 & 0.0348 & 0.0301 & 0.0325 & 0.0406 & 0.0487 & 0.0494 & \textbf{0.0514} & 2.19\% \\
 & NDCG@10  & 0.0154 & 0.0197 & 0.0564 & 0.0304 & \underline{0.0590} & 0.0409 & 0.0472 & 0.0518 & 0.0417 & 0.0378 & 0.0400 & 0.0483 & 0.0569 & 0.0581 & \textbf{0.0608} & 3.05\% \\
 & NDCG@20  & 0.0205 & 0.0248 & 0.0647 & 0.0385 & \underline{0.0682} & 0.0492 & 0.0548 & 0.0602 & 0.0488 & 0.0460 & 0.0483 & 0.0555 & 0.0654 & 0.0667 & \textbf{0.0702} & 2.93\% \\
\specialrule{0.5pt}{1pt}{2pt} 

\multirow{6}{*}{Toys} 
 & HIT@5 & 0.0164 & 0.0180 & 0.0707 & 0.0481 & \underline{0.0792} & 0.0476 & 0.0589 & 0.0632 & 0.0630 & 0.0576 & 0.0576 & 0.0609 & 0.0776 & 0.0748 & \textbf{0.0818} & 3.28\% \\
 & HIT@10 & 0.0277 & 0.0277 & 0.0941 & 0.0699 & \underline{0.1066} & 0.0690 & 0.0823 & 0.0865 & 0.0863 & 0.0818 & 0.0826 & 0.0816 & 0.1035 & 0.1008 & \textbf{0.1109} & 4.03\% \\
 & HIT@20 & 0.0461 & 0.0421 & 0.1228 & 0.0982 & \underline{0.1405} & 0.0974 & 0.1108 & 0.1166 & 0.1143 & 0.1121 & 0.1137 & 0.1080 & 0.1355 & 0.1323 & \textbf{0.1462} & 4.06\% \\
 & NDCG@5 & 0.0104 & 0.0117 & 0.0523 & 0.0326 & \underline{0.0574} & 0.0332 & 0.0424 & 0.0453 & 0.0447 & 0.0399 & 0.0393 & 0.0449 & 0.0566 & 0.0547 & \textbf{0.0592} & 3.14\% \\
 & NDCG@10  & 0.0140 & 0.0149 & 0.0598 & 0.0396 & \underline{0.0662} & 0.0401 & 0.0499 & 0.0529 & 0.0522 & 0.0477 & 0.0473 & 0.0515 & 0.0650 & 0.0631 & \textbf{0.0686} & 3.63\% \\
 & NDCG@20  & 0.0187 & 0.0185 & 0.0671 & 0.0468 & \underline{0.0747} & 0.0472 & 0.0570 & 0.0605 & 0.0592 & 0.0553 & 0.0552 & 0.0582 & 0.0731 & 0.0710  & \textbf{0.0775} & 3.75\% \\

\specialrule{0.5pt}{1pt}{2pt} 

\multirow{6}{*}{Yelp} 
 & HIT@5 & 0.0129 & 0.0137 & 0.0240 & 0.0147 & 0.0252 & 0.0215 & 0.0255 & 0.0164 & 0.0238 & 0.0221 & 0.0232 & 0.0236 & \underline{0.0260} & 0.0243 & \textbf{0.0286} & 10.00\% \\
 & HIT@10 & 0.0227 & 0.0246 & 0.0410 & 0.0254 & \underline{0.0432} & 0.0361 & 0.0423 & 0.0281 & 0.0404 & 0.0375 & 0.0394 & 0.0402 & 0.0431 & 0.0409 & \textbf{0.0485} & 12.27\% \\
 & HIT@20 & 0.0386 & 0.0419 & 0.0652 & 0.0418 & \underline{0.0704} & 0.0608 & 0.0687 & 0.0474 & 0.0655 & 0.0618 & 0.0645 & 0.0663 & 0.0700 & 0.0654 & \textbf{0.0776} & 10.23\% \\
 & NDCG@5 & 0.0082 & 0.0086 & 0.0152 & 0.0091 & 0.0159 & 0.0134 & 0.0162 & 0.0102 & 0.0150 & 0.0141 & 0.0147 & 0.0150 & \underline{0.0165} & 0.0156 & \textbf{0.0182} & 10.30\% \\
 & NDCG@10  & 0.0113 & 0.0120 & 0.0207 & 0.0125 & 0.0217 & 0.0181 & 0.0216 & 0.0140  & 0.0204 & 0.0190 & 0.0198 & 0.0202 & \underline{0.0220} & 0.0210  & \textbf{0.0246} & 11.82\% \\
 & NDCG@20  & 0.0153 & 0.0164 & 0.0267 & 0.0166 & 0.0285 & 0.0243 & 0.0282 & 0.0188 & 0.0266 & 0.0251 & 0.0262 & 0.0268 & \underline{0.0288} & 0.0271 & \textbf{0.0319} & 10.76\% \\

\specialrule{1pt}{1pt}{1pt} 

    \end{tabular}%
    }
    \caption{Performance comparison of different methods on 4 datasets. The best results are in boldface and the second-best results are underlined.`Improv.' indicates the relative improvement against the best baseline. More details are in Appendix.}
    
    \label{tab:main_exp}
\end{table*}
\subsubsection{Hard Negative Upweighting Contrastive Loss.}

In addition to enduring hard negatives, we further refine the contrastive loss by focusing on hard negatives, enabling the model to learn more effectively from challenging samples. Inspired by focal-infoNCE~\cite{hou2023improving}, we propose using hard negative upweighting contrastive loss in the contrastive learning framework for SR. The loss is as follows:

\begin{align}
\begin{split}
L_{h} &= -\log \frac{e^{\mathbf{h}_{i}^T \mathbf{h}_{i}^{+}\cdot \text{s}^{+}_{i}/\tau_1}}{e^{\mathbf{h}_{i}^T \mathbf{h}_{i}^{+} \cdot \text{s}^{+}_{i}/\tau_1} + \sum_{\mathbf{n} \in \mathbf{H}_i} e^{(\mathbf{h}_{i}^T \mathbf{n}\cdot \text{s}(\mathbf{h}_{i}, \mathbf{n}) + m)/\tau_1}} \\
\end{split},
\end{align} 
where $\text{s}(\mathbf{a}, \mathbf{b}) = \tanh(\frac{\mathbf{a}^T \mathbf{b}}{\tau_2})$ and $\text{s}^{+}_{i} = \text{s}(\mathbf{h}_{i}, \mathbf{h}_{i}^{+})$. We use $\tanh$ to constrain similarity values between -1 and 1. The hyperparameter $\tau_2$ scales the similarity measure, adjusting the sensitivity of $\tanh$ to input values. $m$ is a hyperparameter that allows flexible adjustment for the re-weighting approach.
Therefore, we reformulate \cref{eq:infonce_with_mix_neg} and \cref{eq:C_eq} as:

\begin{align}
L_{cl}^{\prime}(\mathbf{h}_{i}, \mathbf{h}_{i}^{+}) &= -\log \frac{e^{ \mathbf{h}_{i}^T \mathbf{h}_{i}^{+}\cdot \text{s}^{+}_{i}/\tau_1 } }{C^{\prime} + \mu \sum_{\mathbf{n} \in \mathbf{H}_i} e^{ \mathbf{h}_{i}^T  \text{SG}(\mathbf{h}_{i, \mathbf{n}}^{-})\cdot \text{s}^-_{i, \mathbf{n}}/\tau_1 }}, 
\label{eq:focalinfonce_with_mix_neg}
\end{align}
where
\begin{align}
C^{\prime} &= e^{\mathbf{h}_{i}^T \mathbf{h}_{i}^{+} \cdot \text{s}^{+}_{i}/\tau_1} + \sum_{\mathbf{n} \in \mathbf{H}_i} e^{(\mathbf{h}_{i}^T \mathbf{n}\cdot \text{s}(\mathbf{h}_{i}, \mathbf{n}) + m)/\tau_1},
\label{eq:focal_C_eq}
\end{align}
and $\text{s}^-_{i, \mathbf{n}}=\text{s}(\mathbf{h}_{i}, \mathbf{h}_{i, \mathbf{n}}^{-})$. Using \cref{eq:focalinfonce_with_mix_neg}, we upweight hard negative samples in contrastive learning, enabling the model to better distinguish between similar representations.

\subsection{Multi-Task Learning}
Similar to previous works~\cite{qin2024icsrec, qiu2022duorec}, we adopt a multi-task learning technique to simultaneously optimize the revised cross-entropy loss and the auxiliary contrastive learning objectives. \cref{eq:soft_next_item_ce} focuses on optimizing the main task of predicting the next item, while \cref{eq:focalinfonce_with_mix_neg} optimizes the auxiliary contrastive learning task. The overall training loss function is formulated as:
\begin{align}
L_{total} = L_{rec}^{\prime} + \alpha\left(L_{cl}^{\prime}(\mathbf{h}_{i}, \mathbf{h}_{i}^{+}) + L_{cl}^{\prime}(\mathbf{h}_{i}^{+}, \mathbf{h}_{i})\right),
\end{align}
where $\alpha$ is a hyperparameter controlling the weight of the revised contrastive loss relative to the revised cross-entropy loss.
Notably, our FENRec can be integrated into different contrastive learning frameworks in SR by replacing the contrastive loss $L_{cl}$ with our proposed $L_{cl}^{\prime}$ and the cross-entropy loss $L_{rec}$ with our proposed $L_{rec}^{\prime}$. 

\section{Experiments}

\subsection{Experimental Setup}
\subsubsection{Metrics.}
We assess model performance using Hit Ratio@K (HR@K) and Normalized Discounted Cumulative Gain@K (NDCG@K), with K selected from \{5, 10, 20\}. Following~\cite{krichene2020sampled, wang2019neural}, we evaluate the ranking of predictions across whole item set.
\subsubsection{Datasets.}
The Amazon dataset is a popular dataset in SR research. In this study, following previous work~\cite{chen2022iclrec, qin2024icsrec}, we select three categories: \textsl{Sports}, \textsl{Beauty}, and \textsl{Toys}. In addition, Yelp is a dataset for business recommendation and we use records after January 1st, 2019 according to previous studies~\cite{chen2022iclrec}.
\subsubsection{Baselines.\footnote{Details of the baselines are provided in Appendix.}}

We compare our method, FENRec, with state-of-the-art SR approaches, broadly divided into 3 categories:

\begin{itemize}
\item \textbf{General sequential models}:       
 Caser~\cite{tang2018caser}, GRURec~\cite{hidasi2015gru4rec}, SASRec~\cite{kang2018sasrec}, LRURec~\cite{yue2024lrurec}, and BSARec~\cite{shin2024bsarec}
\item \textbf{Sequential models with self-supervised learning}: 
 BERT4Rec~\cite{sun2019bert4rec}, MAERec~\cite{ye2023graph}, CL4SRec~\cite{xie2022cl4srec}, CoSeRec~\cite{liu2021coserec}, CBiT~\cite{du2022CBiT}, DuoRec~\cite{qiu2022duorec}, ICLRec~\cite{chen2022iclrec}, and ICSRec~\cite{qin2024icsrec}
 \item \textbf{Sequential models with label smoothness}: 
    MVS~\cite{zhou2023MVS}
\end{itemize}

\subsubsection{Implementation Details.\footnote{More implementation details are provided in Appendix.}}
In our experiments, we configure the embedding dimension to 64 and the maximum user sequence length to 50 across all methods. We adjust the batch size to 256, although for the MVS system, we use a reduced batch size of 64 for the Sports and Yelp datasets during training due to memory constraints. For our evaluation framework, we integrate our FENRec method into the ICSRec system to facilitate a comprehensive comparison. To ensure uniformity in our representations, we employ a noise-based negative sampling method on sentence representations as outlined by~\cite{zhou2022dclr}. Parameter tuning is meticulously carried out; $\tau_2$ is varied within the set \{8, 10\}, while $\tau_1$ was fixed at 1, $\mu$ at 0.1, and $m$ at 0.2. The parameters $\gamma$ and $\lambda$ are each tuned over the range \{0.1, 0.2, 0.3, 0.4, 0.5\}. Following a 20-epoch warm-up period, we incorporate enduring hard negatives into the training process. All experiments were conducted three times to ensure reliability, and results were averaged to provide a fair comparison.

\subsection{Comparison to SOTA}
\cref{tab:main_exp} shows the recommendation performance for all datasets across all metrics. First, contrastive learning methods CL4SRec, CoSeRec, ICLRec, DuoRec, and ICSRec significantly improve upon their backbone, SASRec, by effectively mitigating data sparsity issues in SR, demonstrating the effectiveness of contrastive learning. Second, MVS also demonstrates notable improvements over its backbone, SASRec, due to the context and label smoothness it introduces, which helps the model capture user preferences more effectively and reduces overfitting. Finally, FENRec outperforms all other models across all metrics, achieving an average improvement of 6.34\% in HIT and 5.99\% in NDCG over the second-best results on all datasets. Notably, it shows greater gains in metrics at larger k values compared to those at 5. This improvement is likely due to the inclusion of enduring hard negatives and time-dependent soft labels, which enhance the model's ability to differentiate similar user preferences. The increased discriminative power also improves rankings across the list, not just at the top. 

\subsection{Compatibility to Existing Methods}

\begin{table}[t]
    \centering
    \footnotesize
    \resizebox{\columnwidth}{!}{%
    \begin{tabular}{ll| ccc|ccc}
    \specialrule{1pt}{1pt}{2pt}
    Dataset & Metric & CL4SRec & + FENRec & Improv. & DuoRec & + FENRec & Improv.    \\
    \specialrule{1pt}{1pt}{2pt}

\multirow{6}{*}{Sports} 
 & HIT@5 & 0.0235 & \textbf{0.0249} & 5.96\% & 0.0311 & \textbf{0.0330} & 6.11\% \\
 & HIT@10 & 0.0375 & \textbf{0.0389} & 3.73\% & 0.0446 & \textbf{0.0473} & 6.05\% \\
 & HIT@20 & 0.0575 & \textbf{0.0585} & 1.74\% & 0.0640 & \textbf{0.0681} & 6.41\% \\
 & NDCG@5 & 0.0156 & \textbf{0.0167} & 7.05\% & 0.0220 & \textbf{0.0231} & 5.00\% \\
 & NDCG@10  & 0.0201 & \textbf{0.0212} & 5.47\% & 0.0263 & \textbf{0.0277} & 5.32\% \\
 & NDCG@20  & 0.0251 & \textbf{0.0261} & 3.98\% & 0.0312 & \textbf{0.0329} & 5.45\% \\
\specialrule{0.5pt}{1pt}{2pt} 
\multirow{6}{*}{Beauty} 
 & HIT@5 & \textbf{0.0492} & \textbf{0.0492} & 0.00\% & 0.0560 & \textbf{0.0586} & 4.64\% \\
 & HIT@10 & 0.0706 & \textbf{0.0718} & 1.70\% & 0.0800 & \textbf{0.0834} & 4.25\% \\
 & HIT@20 & 0.0990 & \textbf{0.1028} & 3.84\% & 0.1088 & \textbf{0.1145} & 5.24\% \\
 & NDCG@5 & \textbf{0.0348} & 0.0345 & -0.86\% & 0.0406 & \textbf{0.0421} & 3.69\% \\
 & NDCG@10  & 0.0417 & \textbf{0.0418} & 0.24\% & 0.0483 & \textbf{0.0501} & 3.73\% \\
 & NDCG@20  & 0.0488 & \textbf{0.0496} & 1.64\% & 0.0555 & \textbf{0.0579} & 4.32\% \\
\specialrule{0.5pt}{1pt}{2pt} 
\multirow{6}{*}{Toys} 
 & HIT@5 & 0.0630 & \textbf{0.0641} & 1.75\% & 0.0609 & \textbf{0.0648} & 6.40\% \\
 & HIT@10 & 0.0863 & \textbf{0.0890} & 3.13\% & 0.0816 & \textbf{0.0871} & 6.74\% \\
 & HIT@20 & 0.1143 & \textbf{0.1208} & 5.69\% & 0.1080 & \textbf{0.1177} & 8.98\% \\
 & NDCG@5 & 0.0447 & \textbf{0.0455} & 1.79\% & 0.0449 & \textbf{0.0477} & 6.24\% \\
 & NDCG@10  & 0.0522 & \textbf{0.0535} & 2.49\% & 0.0515 & \textbf{0.0549} & 6.60\% \\
 & NDCG@20  & 0.0592 & \textbf{0.0615} & 3.89\% & 0.0582 & \textbf{0.0626} & 7.56\% \\
\specialrule{0.5pt}{1pt}{2pt} 
\multirow{6}{*}{Yelp} 
 & HIT@5 & 0.0238 & \textbf{0.0243} & 2.10\% & 0.0236 & \textbf{0.0246} & 4.24\% \\
 & HIT@10 & 0.0404 & \textbf{0.0409} & 1.24\% & 0.0402 & \textbf{0.0421} & 4.73\% \\
 & HIT@20 & 0.0655 & \textbf{0.0676} & 3.21\% & 0.0663 & \textbf{0.0687} & 3.62\% \\
 & NDCG@5 & 0.0150 & \textbf{0.0155} & 3.33\% & 0.0150 & \textbf{0.0155} & 3.33\% \\
 & NDCG@10  & 0.0204 & \textbf{0.0208} & 1.96\% & 0.0202 & \textbf{0.0212} & 4.95\% \\
 & NDCG@20  & 0.0266 & \textbf{0.0275} & 3.38\% & 0.0268 & \textbf{0.0279} & 4.10\% \\
\specialrule{1pt}{1pt}{1pt} 
    \end{tabular}%
    }
    \caption{Performance improvements with our method applied to other frameworks on 4 datasets. The ``+FENRec" notation indicates the integration of our method. ``improv." represents the improvement over the original methods.}
    
    \label{tab:on_others}
\end{table}

Here, we examine the integration of our FENRec model into various contrastive learning frameworks, specifically CL4SRec and DuoRec. As demonstrated in \cref{tab:on_others}, FENRec significantly enhances the performance of contrastive learning frameworks for sequential recommendation (SR) across most evaluated metrics, underscoring its versatility and effectiveness in diverse SR systems. Notably, the integration of FENRec with DuoRec yields more pronounced improvements compared to CL4SRec. This discrepancy can largely be attributed to the nature of the augmentation strategies employed by each framework. CL4SRec relies on methods such as masking and cropping to generate positive pairs. While effective, these strategies can sometimes alter the underlying user intent and diminish semantic similarity, potentially leading to less effective learning of user preferences. Conversely, DuoRec preserves user intent more effectively by aligning sequences that share the same subsequent item as positive pairs. This method enhances the semantic coherence between the pairs, thereby amplifying the benefits when combined with FENRec. This analysis highlights the critical role of augmentation strategies in optimizing the efficacy of contrastive learning frameworks for SR, particularly when augmented with our FENRec model.

\subsection{Ablation Study}
\begin{table}[t]
    \centering
    \footnotesize
    \resizebox{\columnwidth}{!}{%
    \begin{tabular}{ll| cccccc}
    \specialrule{1pt}{1pt}{2pt}
    Dataset & Method & HIT@5   & HIT@10   & HIT@20   & NDCG@5   & NDCG@10   & NDCG@20    \\
    \specialrule{1pt}{1pt}{2pt}

\multirow{3}{*}{Sports} 
 & FENRec & \textbf{0.0431}  & \textbf{0.0621} & \textbf{0.0890} & \textbf{0.0299} & \textbf{0.0361} & \textbf{0.0429} \\
 & FENRec - S & 0.0403  & 0.0578 & 0.0806 & 0.0282 & 0.0339 & 0.0396 \\
 & FENRec - N & 0.0422  & 0.0616 & 0.0869 & 0.0292 & 0.0354 & 0.0417 \\ 
\specialrule{0.5pt}{1pt}{2pt} 
\multirow{3}{*}{Beauty} 
 & FENRec & \textbf{0.0728}  & \textbf{0.1019} & \textbf{0.1393} & \textbf{0.0514} & \textbf{0.0608} & \textbf{0.0702} \\
 & FENRec - S & 0.0693  & 0.0968 & 0.1317 & 0.0496 & 0.0585 & 0.0673 \\
 & FENRec - N & 0.0707  & 0.0991 & 0.1363 & 0.0501 & 0.0593 & 0.0687 \\
\specialrule{0.5pt}{1pt}{2pt} 
\multirow{3}{*}{Toys} 
 & FENRec & \textbf{0.0818}  & \textbf{0.1109} & \textbf{0.1462} & \textbf{0.0592} & \textbf{0.0686} & \textbf{0.0775} \\
 & FENRec - S & 0.0789  & 0.1055 & 0.1392 & 0.0573 & 0.0659 & 0.0744 \\
 & FENRec - N & 0.0806  & 0.1092 & 0.1437 & 0.0582 & 0.0675 & 0.0762 \\
\specialrule{0.5pt}{1pt}{2pt} 
\multirow{3}{*}{Yelp} 
 & FENRec & \textbf{0.0286}  & \textbf{0.0485} & \textbf{0.0776} & \textbf{0.0182} & \textbf{0.0246} & \textbf{0.0319} \\
 & FENRec - S & 0.0275  & 0.0455 & 0.0740 & 0.0174 & 0.0232 & 0.0304 \\
 & FENRec - N & 0.0268  & 0.0454 & 0.0738 & 0.0170 & 0.0230 & 0.0301 \\
\specialrule{1pt}{1pt}{1pt} 
    \end{tabular}%
    }
    \caption{Ablation study results across all 4 datasets. S denotes Time-Dependent Soft Labeling, and N denotes Enduring Hard Negatives Incorporation. The chart shows performance drops when either S or N is removed.}
    \label{tab:ablation_study}
\end{table}

To evaluate the critical role of each component within the FENRec framework, we executed an ablation study. The results, detailed in \cref{tab:ablation_study}, underscore the individual contributions of each element to the overall effectiveness of the model. Notably, the removal of Time-Dependent Soft Labeling from FENRec resulted in a larger performance decline compared to the exclusion of Enduring Hard Negatives Incorporation. This impact probably stems from the role of time-dependent soft labels in enhancing the model's capacity to handle uncertainty and capture potential user interest. While the integration of enduring hard negatives is pivotal, it primarily assists the model in refining the accuracy of items that users might seem interested in but are not the main targets of prediction. However, this approach carries the inherent risk of incorrectly dismissing genuinely appealing items as irrelevant, leading to potential false negatives. Moreover, soft labels effectively counterbalance this risk by allowing the model to better capture and represent user interest ambiguity, thus optimizing the effectiveness of hard negatives. This synergy underscores the necessity of incorporating both components to achieve optimal performance in FENRec, as shown in our ablation study findings.

\subsection{Analysis}
\subsubsection{Robustness Across Varying Sequence Lengths.}
\begin{figure}[htbp]
    \centering
    \begin{minipage}[b]{0.49\columnwidth}
        \centering
        \includegraphics[width=\linewidth]{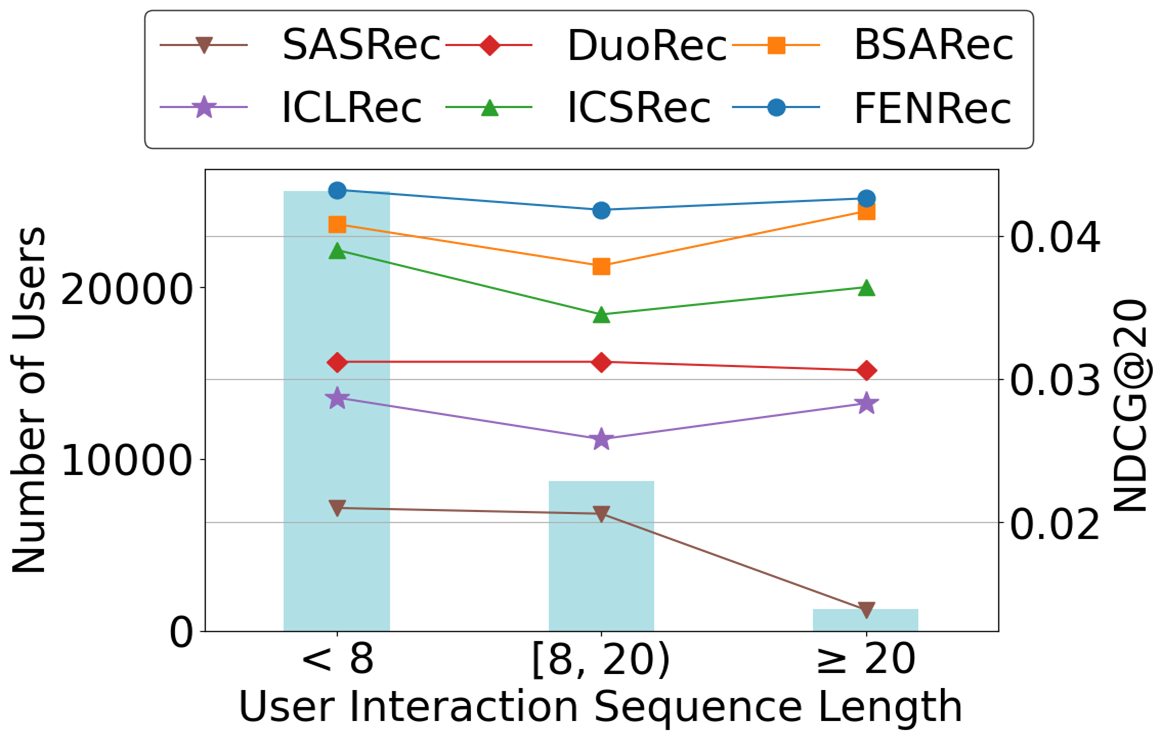}
        \subcaption{Sports}\label{fig:Sports_seqlen}
    \end{minipage}
    \hfill
    \begin{minipage}[b]{0.49\columnwidth}
        \centering
        \includegraphics[width=\linewidth]{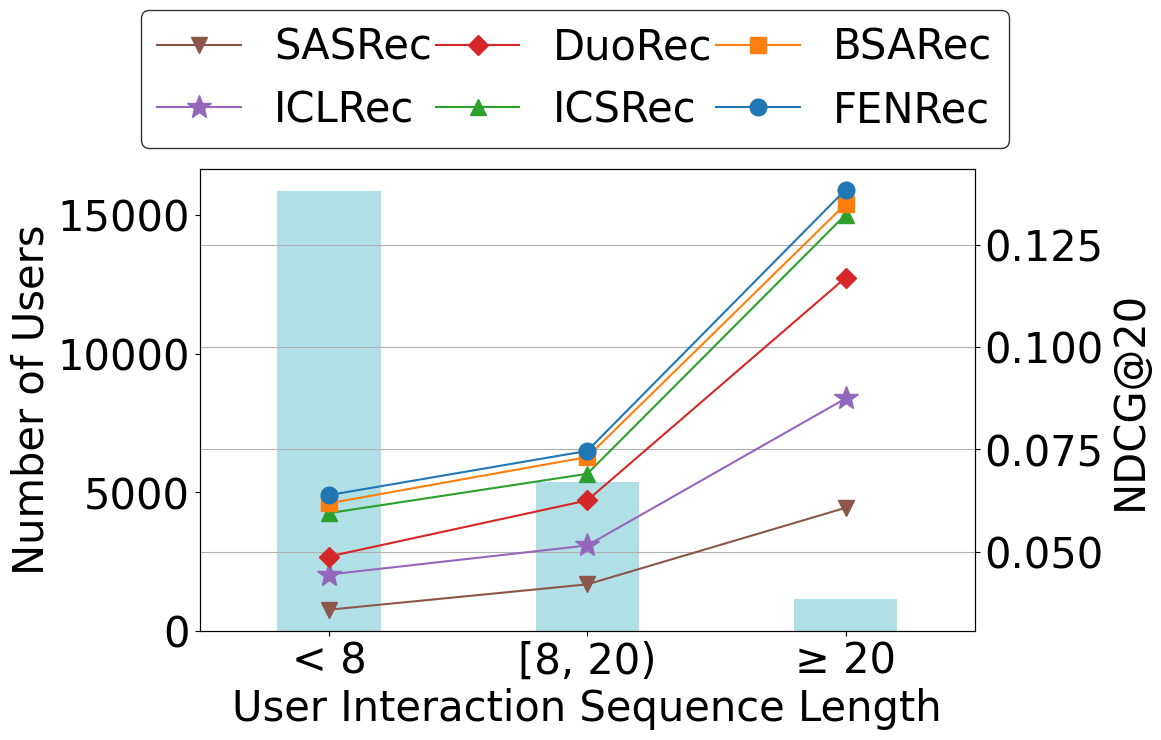}
        \subcaption{Beauty}\label{fig:Beauty_seqlen}
    \end{minipage}
    \vfill
    \begin{minipage}[b]{0.49\columnwidth}
        \centering
        \includegraphics[width=\linewidth]{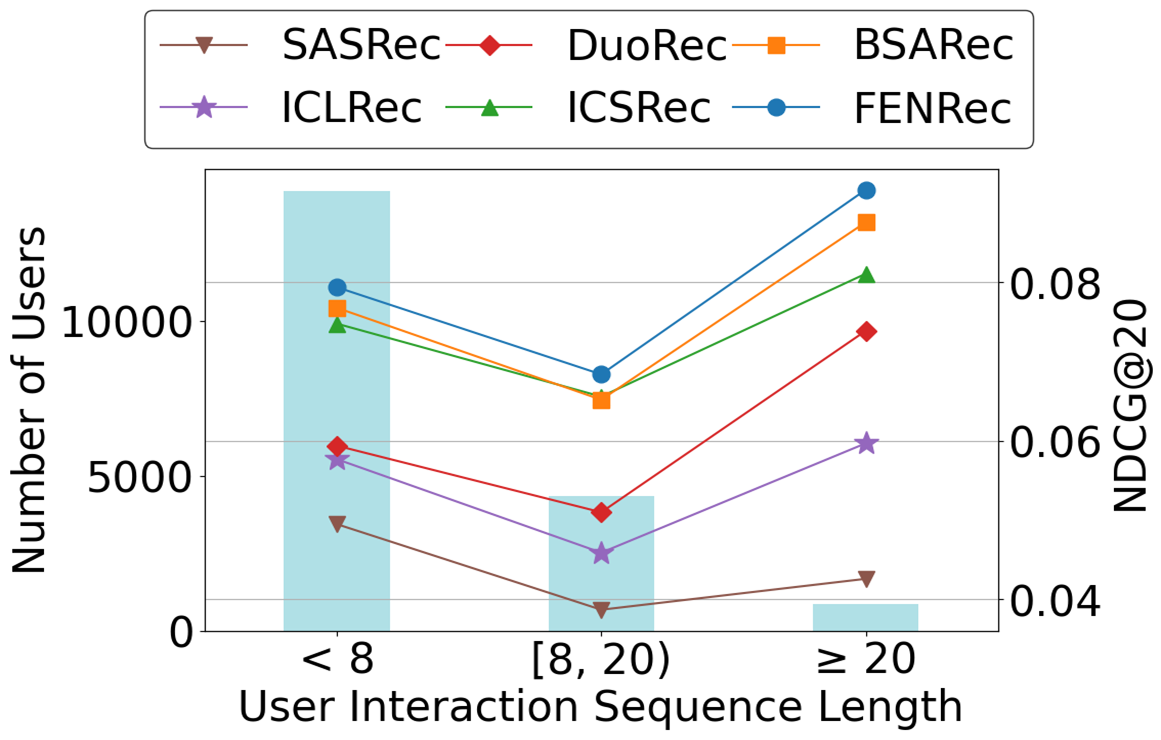} 
        \subcaption{Toys}\label{fig:Toys_seqlen}
    \end{minipage}
    \hfill
    \begin{minipage}[b]{0.49\columnwidth}
        \centering
        \includegraphics[width=\linewidth]{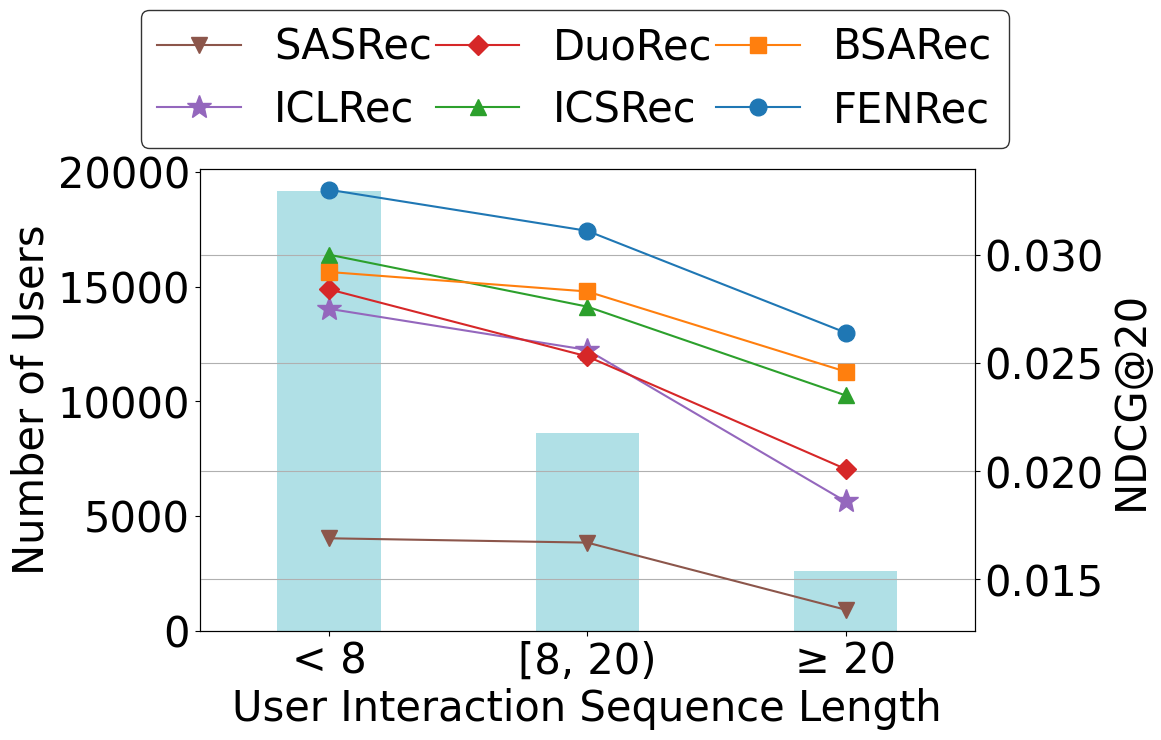} 
        \subcaption{Yelp}\label{fig:Yelp_seqlen}
    \end{minipage}
    \caption[Comparison of Model Performance Across Different User Interaction Sequence Lengths.]{Comparison of model performance across different user interaction sequence lengths. The bar chart represents the number of users, and the line chart indicates NDCG@20.}\label{fig:seqlen_exp}
\end{figure}

We tested the robustness of FENRec by evaluating its performance across user groups with different interaction sequence lengths (i.e., fewer than 8, 8 to 20, and more than 20 interactions). As shown in \cref{fig:seqlen_exp}, contrastive learning methods enhance the performance of the backbone (i.e., SASRec) across all sequence lengths, highlighting their effectiveness. On Amazon datasets (i.e., Sports, Beauty, and Toys), contrastive learning methods show greater performance improvements for users with longer interaction sequences, while on the Yelp dataset, shorter sequences benefit more. This may be due to users with longer sequences having a lower average number of interactions per item within their sequences on the Yelp dataset. For the statistics, readers can refer to Appendix.
Overall, our method consistently outperforms baseline methods across all sequence lengths, including scenarios with cold start issues (i.e., Beauty), demonstrating the robustness of FENRec.

\subsubsection{Model Discriminative Capability Analysis.}

\begin{figure}[htbp]
    \centering
    \begin{minipage}[b]{0.48\columnwidth}
        \centering
        \includegraphics[width=\linewidth]{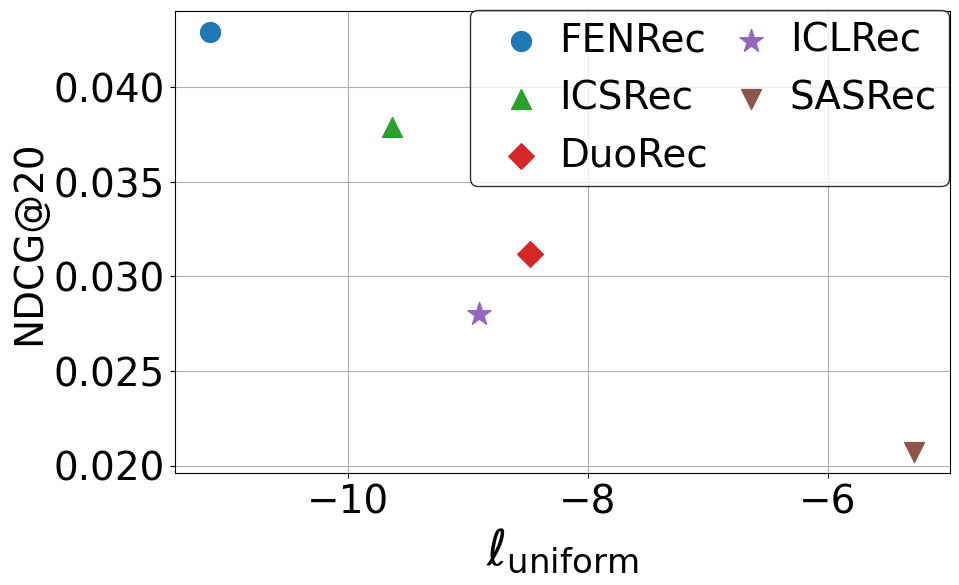}
        \subcaption{Sports}\label{fig:Sports_uni}
    \end{minipage}
    \hfill
    \begin{minipage}[b]{0.48\columnwidth}
        \centering
        \includegraphics[width=\linewidth]{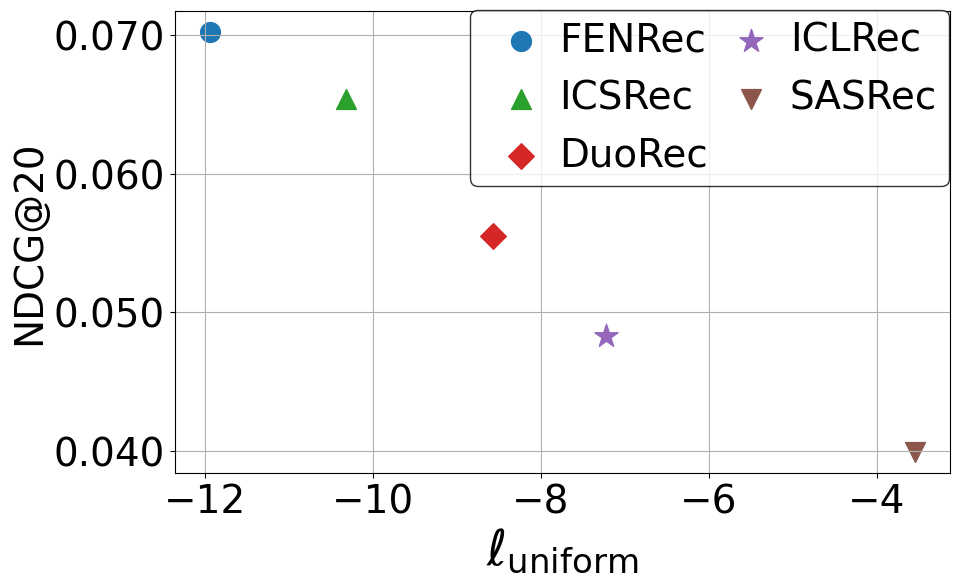}
        \subcaption{Beauty}\label{fig:Beauty_uni}
    \end{minipage}
    \vfill
    \begin{minipage}[b]{0.48\columnwidth}
        \centering
        \includegraphics[width=\linewidth]{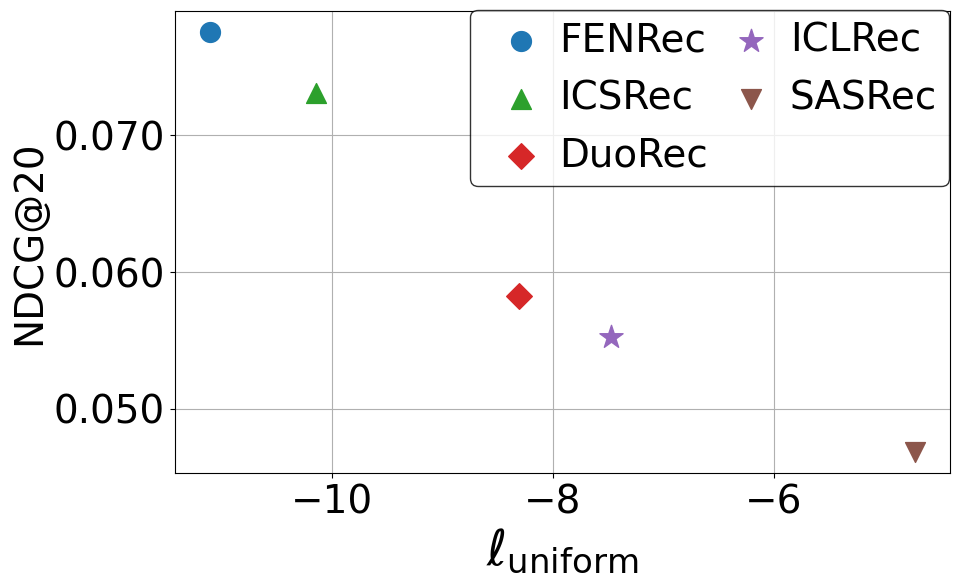} 
        \subcaption{Toys}\label{fig:Toys_uni}
    \end{minipage}
    \hfill
    \begin{minipage}[b]{0.48\columnwidth}
        \centering
        \includegraphics[width=\linewidth]{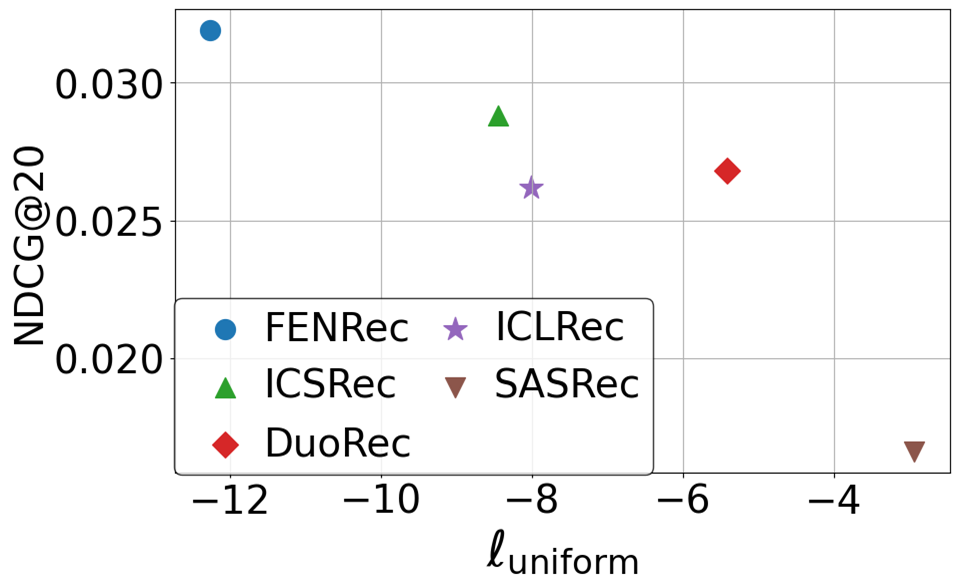} 
        \subcaption{Yelp}\label{fig:Yelp_uni}
    \end{minipage}
    \caption{Comparsion of $\ell_{\text{uniform}}$ and performance.}\label{fig:uniform}
\end{figure}
In this section, we evaluate the model's discriminative capability using item uniformity, which indicates the model's ability to differentiate between semantically similar items. Lower uniformity suggests higher semantic similarities, indicating that the model may struggle to distinguish between items, while higher uniformity indicates a more even distribution of item embeddings, enabling effective differentiation. 
We compare item uniformity across various contrastive learning methods and their backbone, SASRec. The uniformity loss $\ell_{\text{uniform}}$  measures data uniformity. It is calculated using the definition in DuoRec~\cite{qiu2022duorec}. A smaller $\ell_{\text{uniform}}$ represents a more uniform data distribution. As shown in \cref{fig:uniform}, there is a negative correlation between $\ell_{\text{uniform}}$ and NDCG@20 since lower uniformity indicates that the model has not learned sufficiently discriminative features, leading to lower performance. Results demonstrate that FENRec helps increase item uniformity and performance, indicating its effectiveness in helping the model learn discriminative features. This improvement is likely because both components aid in learning finer user preferences. Hard negatives help the model differentiate similar preferences, while soft labels provide finer labels, enhancing the discriminative power of the learned representations and thereby improving the model's comprehension of item semantics.

\subsection{Hyperparameter Tuning}
\begin{figure}[htbp]
    \centering
    \begin{minipage}[b]{0.45\columnwidth}
        \centering
        \includegraphics[width=\linewidth]{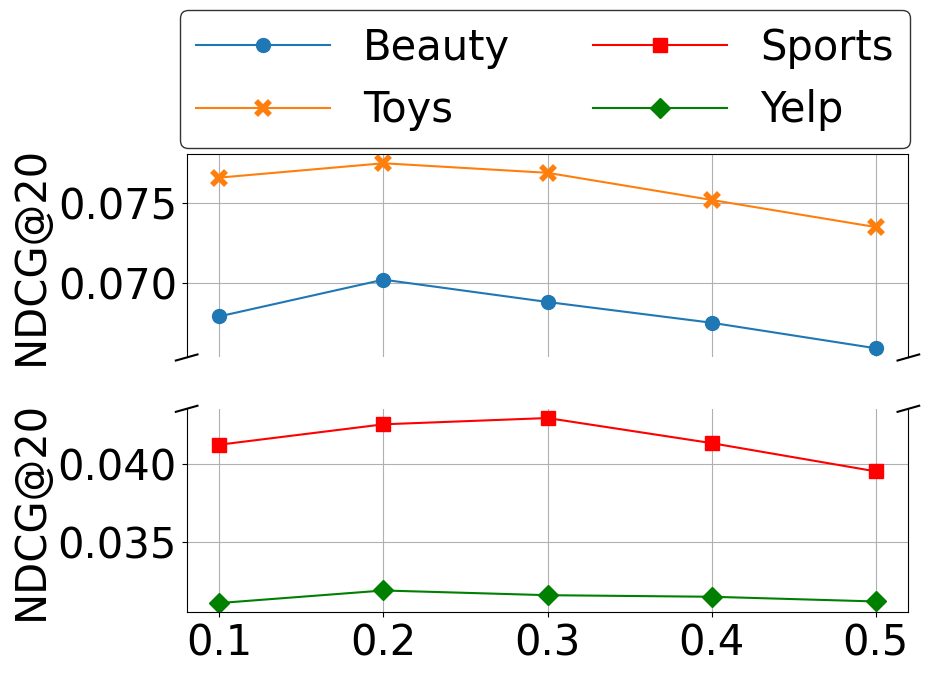}
        \subcaption{$\gamma$}\label{fig:gamma_NDCG}
    \end{minipage}
    \hfill
    \begin{minipage}[b]{0.45\columnwidth}
        \centering
        \includegraphics[width=\linewidth]{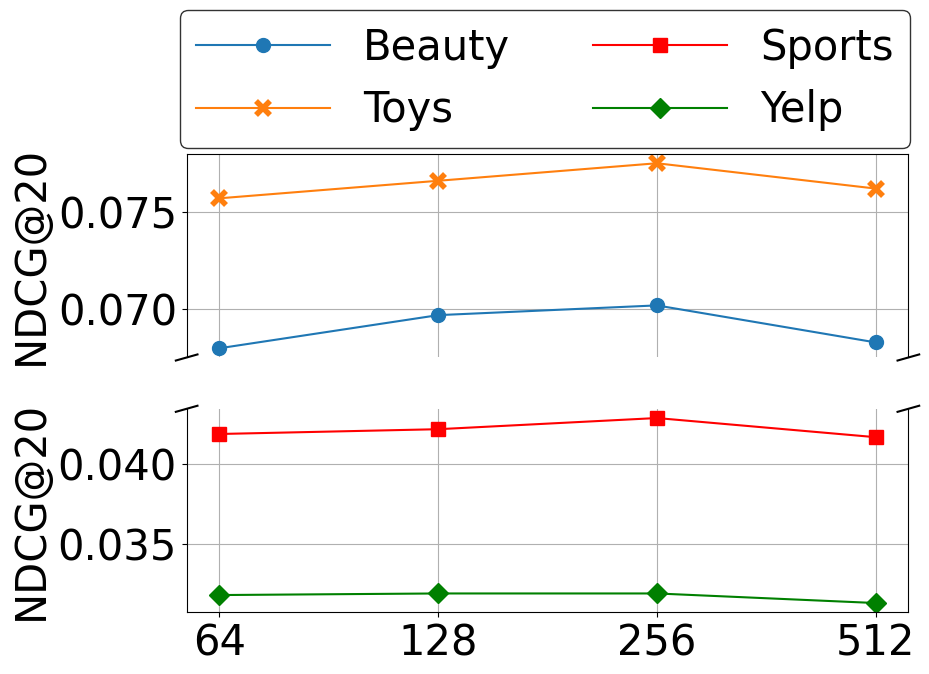}
        \subcaption{Batch size}\label{fig:batch_size_NDCG}
    \end{minipage}
    \caption{Performance of FENRec w.r.t. different hyperparameters.}
    \label{fig:tuning}
\end{figure}

We tune each hyperparameter individually while keeping others optimal to observe performance changes, as shown in \cref{fig:tuning}. First, we adjust $\gamma$. A higher $\gamma$ creates a smoother distribution of soft labels, while a lower $\gamma$ makes it more concentrated.
FENRec shows the best performance when $\gamma$ is 0.2 or 0.3, indicating that a low $\gamma$ can make the model overly confident and a high $\gamma$ can make it unsure about the next item. Next, we assess the impact of batch size, finding FENRec performs best at 256. A smaller batch size lacks sufficient negative samples, while a larger one increases the risk of false negatives, degrading performance. This risk is amplified as FENRec generates hard negatives by mixing anchor and in-batch negatives, causing false negatives to be pushed further away from positives. Due to the space constraint, results of other hyperparameters are in Appendix.

\section{Conclusion}
In this paper, we tackled the issue of data sparsity in sequential recommendation systems by introducing FENRec, an innovative approach that integrates Time-Dependent Soft Labeling and Enduring Hard Negatives Incorporation within contrastive learning frameworks. By utilizing future interactions in Time-Dependent Soft Labeling, our method effectively captures finer user preferences. Additionally, Enduring Hard Negatives Incorporation ensures the model learns from more challenging samples, enhancing its ability to learn discriminative features. Extensive experiments on four benchmark datasets highlight FENRec's effectiveness in enhancing performance and improving the differentiation of user preferences. For future work, we would like to incorporate auxiliary time information to improve the performance.

\section*{Acknowledgments}
  This work is partially supported by the National Science and Technology Council, Taiwan under Grants NSTC-112-2221-E-A49-059-MY3 and NSTC-112-2221-E-A49-094-MY3.

\bibliography{aaai25}

\clearpage
\begin{appendices}

\renewcommand{\thefigure}{\Alph{figure}}
\renewcommand{\thetable}{\Alph{table}}

\appendix

\setcounter{equation}{16}

\section{Method Details}
\subsection{Enduring Hard Negatives Incorporation}
The following demonstrates that the enduring hard negatives $\mathbf{h}^{-}_{i,j}$ we proposed satisfy $\mathbf{h}^{-}_{i,j} \cdot \mathbf{h}_{i} \geq \mathbf{h}_{j} \cdot \mathbf{h}_{i}$, where $\mathbf{h}_i$ is the anchor sample and $\mathbf{h}_j$  represents the original negative samples.

\newtheorem{lemma}{Lemma}

\begin{lemma}
\label{lem:vector-angle}
Let $ \mathbf{x} $ and $ \mathbf{y} $ be non-zero vectors (i.e., $ \|\mathbf{x}\|_2 \neq 0 $, $ \|\mathbf{y}\|_2 \neq 0 $) and $ \mathbf{x} \neq -\mathbf{y} $. Define $ \mathbf{z} = \mathbf{x} + \mathbf{y} $. Then the cosine of the angle between $ \mathbf{x} $ and $ \mathbf{z} $ is greater than or equal to the cosine of the angle between $ \mathbf{x} $ and $ \mathbf{y} $. Formally, we have:

\[
\cos(\theta_{\mathbf{x}\mathbf{z}}) \geq \cos(\theta_{\mathbf{x}\mathbf{y}}).
\]

\end{lemma}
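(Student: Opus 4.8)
The plan is to reduce the vector inequality to a one-variable polynomial statement and exploit Cauchy--Schwarz. First I would set $a=\|\mathbf{x}\|_2$, $b=\|\mathbf{y}\|_2$ and $c=\mathbf{x}^{T}\mathbf{y}$; both $a,b>0$ by hypothesis, and since $\mathbf{x}\neq-\mathbf{y}$ the sum $\mathbf{z}=\mathbf{x}+\mathbf{y}$ is nonzero, so $D:=\|\mathbf{z}\|_2=\sqrt{a^{2}+2c+b^{2}}>0$. Expanding the definitions gives $\cos(\theta_{\mathbf{x}\mathbf{z}})=(a^{2}+c)/(aD)$ and $\cos(\theta_{\mathbf{x}\mathbf{y}})=c/(ab)$. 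Because $a,b,D$ are all strictly positive, I can clear denominators without reversing the inequality, so the claim $\cos(\theta_{\mathbf{x}\mathbf{z}})\ge\cos(\theta_{\mathbf{x}\mathbf{y}})$ is equivalent to $b(a^{2}+c)\ge cD$.

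Next I would normalize by writing $t=c/(ab)=\cos(\theta_{\mathbf{x}\mathbf{y}})$, which lies in $[-1,1]$ precisely by Cauchy--Schwarz, and substitute $c=abt$. Dividing through by $ab>0$ turns the target into $a+bt\ge tD$. Rationalizing the quantity $b-D$ via $b-D=-a(a+2bt)/(b+D)$ and using that $a$ and $b+D$ are positive, this is in turn equivalent to the single inequality $D\ge R$, where $R:=at+2bt^{2}-b$.

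Then I would finish by a sign split on $R$. If $R\le 0$, the inequality $D\ge R$ holds trivially since $D>0$. If $R>0$, both sides are positive and it suffices to compare squares; the crux of the whole argument is the algebraic simplification of $D^{2}-R^{2}$, which I expect to collapse to
\[
D^{2}-R^{2}=(1-t^{2})(a+2bt)^{2}.
\]
This is manifestly nonnegative, since $|t|\le 1$ gives $1-t^{2}\ge 0$ and the second factor is a square; hence $D\ge R$ and the lemma follows. Geometrically this matches the intuition that adding $\mathbf{x}$ to $\mathbf{y}$ pulls the resulting direction toward $\mathbf{x}$, shrinking the angle and thus enlarging its cosine.

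The step I expect to be the main obstacle is the sign bookkeeping together with that final factorization: the two cosines can be negative, so each cross-multiplication and the squaring must be justified through the positivity of $a,b,D$ and the case analysis on $R$, and the expansion of $D^{2}-R^{2}$ must be carried out accurately so that all cross terms cancel into the clean product $(1-t^{2})(a+2bt)^{2}$. Verifying that cancellation is really the heart of the proof.
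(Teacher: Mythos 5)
Your proof is correct, and it takes a genuinely different route from the paper's. The paper decomposes $\mathbf{y}$ into its projection onto $\mathbf{x}$ and the orthogonal remainder, disposes of the collinear case by enumerating the two possible angle configurations, and then handles the general case by asserting geometrically that adding $\mathbf{x}$ pulls the resultant closer in direction to $\mathbf{x}$; that decisive step is stated as an appeal to intuition rather than computed. You instead reduce everything to scalars: with $a=\|\mathbf{x}\|_2$, $b=\|\mathbf{y}\|_2$, $t=\cos(\theta_{\mathbf{x}\mathbf{y}})\in[-1,1]$ and $D=\|\mathbf{z}\|_2>0$ (positivity guaranteed by $\mathbf{x}\neq-\mathbf{y}$), the claim becomes $a+bt\ge tD$ after clearing the positive denominators. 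I have verified the identity $a+bt-tD=\frac{a\,(D-R)}{b+D}$ with $R=at+2bt^{2}-b$, which justifies your reduction to $D\ge R$, as well as the factorization $D^{2}-R^{2}=(1-t^{2})(a+2bt)^{2}\ge 0$; together with the sign split on $R$ (which licenses the squaring), this closes the argument. What your approach buys is a fully explicit, case-light verification that also reveals exactly when equality can occur --- only when $t=\pm1$, i.e.\ in the collinear case, since for $|t|<1$ either $D^{2}-R^{2}>0$ or $R<0<D$ --- which is consistent with the strictness claim the paper needs in its second lemma. What the paper's approach buys is geometric transparency and a shorter exposition, at the cost of leaving its key non-collinear step unproved in any formal sense; your polynomial computation would in fact serve as a rigorous replacement for it.
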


\begin{proof}
To establish the result, we first decompose $\mathbf{y}$ into two components: the projection of $\mathbf{y}$ onto $\mathbf{x}$, denoted as $\mathbf{y}_x$, which is in the same direction as $\mathbf{x}$, and the component of $\mathbf{y}$ that is orthogonal to $\mathbf{x}$, denoted as $\mathbf{y}_{x^\perp}$. Thus, we can express $\mathbf{y}$ as:
\begin{align}
\mathbf{y} = \mathbf{y}_x + \mathbf{y}_{x^\perp}.
\end{align}
Consequently, the resultant vector $\mathbf{z}$ can be written as:
\begin{align}
\mathbf{z} = \mathbf{x} + \mathbf{y} = \mathbf{x} + \mathbf{y}_x + \mathbf{y}_{x^\perp}.
\end{align}
We now consider two cases: 1) $\mathbf{x}$ and $\mathbf{y}$ are collinear and 2) $\mathbf{x}$ and $\mathbf{y}$ are not collinear
\begin{enumerate}[label=\arabic*)]
    \item When $\mathbf{x}$ and $\mathbf{y}$ are collinear, the projection of $\mathbf{y}$ onto $\mathbf{x}$, denoted as $\mathbf{y}_x$, equals $\mathbf{y}$, and the orthogonal component $\mathbf{y}_{x^\perp}$ is zero. This results in either of the following scenarios\footnote{We constrain the angles $\theta_{\mathbf{x}\mathbf{z}}$ and $\theta_{\mathbf{x}\mathbf{y}}$ to the range $ -\pi \leq \theta < \pi $.}: a) $\theta_{\mathbf{x}\mathbf{z}} = \theta_{\mathbf{x}\mathbf{y}} = 0$ or b) $\theta_{\mathbf{x}\mathbf{z}} = \theta_{\mathbf{x}\mathbf{y}} + \pi = 0$.
    \begin{enumerate}[label=\alph*), leftmargin=2em]
        \item Since $\theta_{\mathbf{x}\mathbf{z}} = \theta_{\mathbf{x}\mathbf{y}}$, we have $\cos(\theta_{\mathbf{x}\mathbf{z}}) = \cos(\theta_{\mathbf{x}\mathbf{y}})$. 
        \item $\theta_{\mathbf{x}\mathbf{z}} = 0$ and $\theta_{\mathbf{x}\mathbf{y}} = - \pi$, which implies $\cos(\theta_{\mathbf{x}\mathbf{z}}) > \cos(\theta_{\mathbf{x}\mathbf{y}})$. 
    \end{enumerate}
    Therefore, we conclude that $\cos(\theta_{\mathbf{x}\mathbf{z}}) \geq \cos(\theta_{\mathbf{x}\mathbf{y}})$.
    \item When $\mathbf{x}$ and $\mathbf{y}$ are not collinear, the vector $\mathbf{y}$ has a non-zero orthogonal component $\mathbf{y}_{x^\perp}$. As a result, $\mathbf{z}$, which includes both $\mathbf{x}$ and the aligned component $\mathbf{y}_x$ (i.e., the projection of $\mathbf{y}$ onto $\mathbf{x}$), is closer in direction to $\mathbf{x}$ than $\mathbf{y}$, making the angle $\theta_{\mathbf{x}\mathbf{z}}$ closer to 0 compared to $\theta_{\mathbf{x}\mathbf{y}}$. This implies $\cos(\theta_{\mathbf{x}\mathbf{z}}) > \cos(\theta_{\mathbf{x}\mathbf{y}})$.
\end{enumerate}
Considering both cases, we can conclude that:
\begin{align}
\cos(\theta_{\mathbf{x}\mathbf{z}}) \geq \cos(\theta_{\mathbf{x}\mathbf{y}}).
\end{align}
\end{proof}

\begin{lemma}
Given non-zero vectors \( \mathbf{h}_i \) and \( \mathbf{h}_j \), and let \( 0 < \lambda < 1 \), define
\[
    \mathbf{\tilde{h}}_{i} = \frac{\mathbf{h}_{i}}{\|\mathbf{h}_{i}\|_2},
\]
\[
    \mathbf{\tilde{h}}_{j} = \frac{\mathbf{h}_{j}}{\|\mathbf{h}_{j}\|_2},
\]
\[
    \mathbf{h}^{-}_{i,j} = \frac{\lambda \mathbf{\tilde{h}}_{i} + (1 - \lambda) \mathbf{\tilde{h}}_{j}}{\|\lambda \mathbf{\tilde{h}}_{i} + (1 - \lambda) \mathbf{\tilde{h}}_{j}\|_2} \cdot \|\mathbf{h}_{j}\|_2,
\]
and 
\[
    \mathbf{{h}}_{i} \neq \beta\mathbf{{h}}_{j}, \text{where } \beta \in \mathbb{R},
\]
we have:
\[
    \mathbf{h}^{-}_{i,j} \cdot \mathbf{h}_{i} > \mathbf{h}_{j} \cdot \mathbf{h}_{i}.
\]
\end{lemma}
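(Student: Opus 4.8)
The plan is to strip the claimed inner-product inequality down to a comparison of cosines and then invoke \cref{lem:vector-angle}. The crucial structural fact is that $\mathbf{h}^{-}_{i,j}$ is, by its definition, the unit vector in the direction of $\mathbf{d} := \lambda\mathbf{\tilde{h}}_i + (1-\lambda)\mathbf{\tilde{h}}_j$ rescaled to norm exactly $\|\mathbf{h}_j\|_2$; hence $\|\mathbf{h}^{-}_{i,j}\|_2 = \|\mathbf{h}_j\|_2$ while $\mathbf{h}^{-}_{i,j}$ and $\mathbf{d}$ share a direction. Writing each inner product in its geometric form, both $\mathbf{h}^{-}_{i,j}\cdot\mathbf{h}_i$ and $\mathbf{h}_j\cdot\mathbf{h}_i$ carry the common positive factor $\|\mathbf{h}_j\|_2\,\|\mathbf{h}_i\|_2$. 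Dividing this factor out reduces the target inequality to the purely angular statement that the cosine of the angle between $\mathbf{h}_i$ and $\mathbf{h}^{-}_{i,j}$ exceeds that between $\mathbf{h}_i$ and $\mathbf{h}_j$. This reduction is precisely where the norm rescaling in the definition of $\mathbf{h}^{-}_{i,j}$ earns its keep.

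To produce that angular inequality I would set $\mathbf{x} = \lambda\mathbf{\tilde{h}}_i$ and $\mathbf{y} = (1-\lambda)\mathbf{\tilde{h}}_j$, so that $\mathbf{z} := \mathbf{x}+\mathbf{y} = \mathbf{d}$ is a positive multiple of $\mathbf{h}^{-}_{i,j}$. Because $\lambda>0$ and $1-\lambda>0$, the vector $\mathbf{x}$ is a positive multiple of $\mathbf{h}_i$ and $\mathbf{y}$ a positive multiple of $\mathbf{h}_j$, so the angle $\theta_{\mathbf{x}\mathbf{z}}$ equals the angle between $\mathbf{h}_i$ and $\mathbf{h}^{-}_{i,j}$, while $\theta_{\mathbf{x}\mathbf{y}}$ equals the angle between $\mathbf{h}_i$ and $\mathbf{h}_j$. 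Applying \cref{lem:vector-angle} to $\mathbf{x}$, $\mathbf{y}$, $\mathbf{z}$ then delivers exactly the cosine comparison needed.

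The step I expect to be the main obstacle is upgrading the weak inequality supplied by \cref{lem:vector-angle} to the strict one asserted here. I would discharge this using the hypothesis $\mathbf{h}_i \neq \beta\mathbf{h}_j$: non-collinearity of $\mathbf{h}_i$ and $\mathbf{h}_j$ forces $\mathbf{x}$ and $\mathbf{y}$ to be non-collinear, which places us in case 2) of the proof of \cref{lem:vector-angle}, where the orthogonal component is nonzero and the inequality is strict. The same non-collinearity simultaneously guarantees $\mathbf{d} \neq \mathbf{0}$, so that $\mathbf{h}^{-}_{i,j}$ is well-defined, and ensures $\mathbf{x} \neq -\mathbf{y}$, meeting the precondition of \cref{lem:vector-angle}. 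Multiplying the resulting strict cosine inequality back through by the positive constant $\|\mathbf{h}_j\|_2\,\|\mathbf{h}_i\|_2$ recovers $\mathbf{h}^{-}_{i,j}\cdot\mathbf{h}_i > \mathbf{h}_j\cdot\mathbf{h}_i$, completing the argument.
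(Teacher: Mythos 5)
Your proposal is correct and follows essentially the same route as the paper's proof: reduce the inner-product inequality to a cosine comparison using $\|\mathbf{h}^{-}_{i,j}\|_2 = \|\mathbf{h}_j\|_2$, invoke the auxiliary angle lemma with $\mathbf{x}=\lambda\mathbf{\tilde h}_i$, $\mathbf{y}=(1-\lambda)\mathbf{\tilde h}_j$ after noting that positive rescaling preserves angles, and upgrade to a strict inequality via the non-collinearity hypothesis. Your explicit check that non-collinearity also guarantees $\mathbf{x}\neq-\mathbf{y}$ and that the denominator in the definition of $\mathbf{h}^{-}_{i,j}$ is nonzero is a small but welcome addition over the paper's write-up.
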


\begin{proof}
To show that $\mathbf{h}^{-}_{i,j} \cdot \mathbf{h}_{i} > \mathbf{h}_{j} \cdot \mathbf{h}_{i}$, we begin by expressing the dot products in terms of the norms and cosines of the angles:
\begin{align}
\mathbf{h}^{-}_{i,j} \cdot \mathbf{h}_{i} &= \|\mathbf{h}^{-}_{i,j}\| \|\mathbf{h}_{i}\| \cos(\theta_{\mathbf{h}^{-}_{i,j}, \mathbf{h}_i}),
\end{align}
and
\begin{align}
\mathbf{h}_{j} \cdot \mathbf{h}_{i} &= \|\mathbf{h}_{j}\| \|\mathbf{h}_{i}\| \cos(\theta_{\mathbf{h}_j, \mathbf{h}_i}).
\end{align}
The inequality $\mathbf{h}^{-}_{i,j} \cdot \mathbf{h}_{i} > \mathbf{h}_{j} \cdot \mathbf{h}_{i}$ can be rewritten as:
\begin{align}
\|\mathbf{h}^{-}_{i,j}\| \|\mathbf{h}_{i}\| \cos(\theta_{\mathbf{h}^{-}_{i,j}, \mathbf{h}_i}) > \|\mathbf{h}_{j}\| \|\mathbf{h}_{i}\| \cos(\theta_{\mathbf{h}_j, \mathbf{h}_i}).
\end{align}
Since \( \|\mathbf{h}_{i}\|, \|\mathbf{h}^{-}_{i,j}\| \), and $\|\mathbf{h}_j\|$ are positive and \( \|\mathbf{h}^{-}_{i,j}\| = \|\mathbf{h}_j\| \), the inequality simplifies to:
\begin{align}
\cos(\theta_{\mathbf{h}^{-}_{i,j}, \mathbf{h}_i}) > \cos(\theta_{\mathbf{h}_j, \mathbf{h}_i}),
\end{align}
which directly compares the cosines of the angles between the vectors.
Next, considering that $0 < \lambda < 1$, we observe that:
\begin{align}
\cos(\theta_{\mathbf{h}^{-}_{i,j}, \mathbf{h}_i}) > \cos(\theta_{\mathbf{h}_j, \mathbf{h}_i})
\label{eq:cos1}
\end{align}
holds if and only if:
\begin{align}
\cos(\theta_{\mathbf{\hat{h}}^{-}_{i,j}, \lambda\mathbf{\tilde{h}}_i}) > \cos(\theta_{(1-\lambda)\mathbf{\tilde{h}}_j, \lambda\mathbf{\tilde{h}}_i}),
\label{eq:cos2}
\end{align}
where:
\begin{align}
\mathbf{\hat{h}}^{-}_{i,j} &= \lambda \mathbf{\tilde{h}}_i + (1 - \lambda) \mathbf{\tilde{h}}_j.
\end{align}
The equivalence between Equation~\ref{eq:cos1} and Equation~\ref{eq:cos2} holds because the cosine of the angle between two vectors depends solely on the direction of the vectors and is independent of their norms, provided the vectors are non-zero. In other words, normalizing the vectors or scaling them by a positive scalar does not change the cosine of the angle between them.

By Lemma~\ref{lem:vector-angle}, since $\mathbf{\hat{h}}^{-}_{i,j} = \lambda \mathbf{\tilde{h}}_i + (1 - \lambda) \mathbf{\tilde{h}}_j$ is the linear combination of $\lambda \mathbf{\tilde{h}}_i$ and $(1 - \lambda) \mathbf{\tilde{h}}_j$, and given that $\mathbf{h}_i \neq \beta \mathbf{h}_j$ for $\beta \in \mathbb{R}$, which implies $\mathbf{\tilde{h}}_i \neq -\mathbf{\tilde{h}}_j$, we have:
\begin{align}
\cos(\theta_{\mathbf{\hat{h}}^{-}_{i,j}, \lambda\mathbf{\tilde{h}}_i}) \geq \cos(\theta_{(1-\lambda)\mathbf{\tilde{h}}_j, \lambda\mathbf{\tilde{h}}_i}),
\end{align}
with equality only if $\mathbf{\tilde{h}}_i$ and $\mathbf{\tilde{h}}_j$ are collinear, which contradicts the assumption that \(\mathbf{h}_i \neq \beta \mathbf{h}_j\) for $\beta \in \mathbb{R}$. Thus, the inequality is strict:
\begin{align}
\cos(\theta_{\mathbf{\hat{h}}^{-}_{i,j}, \lambda\mathbf{\tilde{h}}_i}) > \cos(\theta_{(1-\lambda)\mathbf{\tilde{h}}_j, \lambda\mathbf{\tilde{h}}_i}).
\end{align}
Therefore:
\begin{align}
\mathbf{h}^{-}_{i,j} \cdot \mathbf{h}_{i} &> \mathbf{h}_{j} \cdot \mathbf{h}_{i}.
\end{align}
\end{proof}

Finally, the equality $\mathbf{h}^{-}_{i,j} \cdot \mathbf{h}_{i} = \mathbf{h}_{j} \cdot \mathbf{h}_{i}$ holds only if $\mathbf{h}_i$ and $\mathbf{h}_j$ are collinear. Since $\mathbf{h}_i$ and $\mathbf{h}_j$ are 64-dimensional vectors in the context of model training, the probability of the equality $\mathbf{h}^{-}_{i,j} \cdot \mathbf{h}_{i} = \mathbf{h}_{j} \cdot \mathbf{h}_{i}$ holding is exceedingly small. In high-dimensional spaces, the chance of two randomly chosen vectors being exactly collinear is almost zero. Therefore, in practice, $\mathbf{h}^{-}_{i,j} \cdot \mathbf{h}_{i}$ is greater than $\mathbf{h}_{j} \cdot \mathbf{h}_{i}$. 

We demonstrated that the enduring hard negatives $\mathbf{h}^{-}_{i,j}$ we proposed satisfy $\mathbf{h}^{-}_{i,j} \cdot \mathbf{h}_{i} \geq \mathbf{h}_{j} \cdot \mathbf{h}_{i}$. Morover, in most cases, $\mathbf{h}^{-}_{i,j} \cdot \mathbf{h}_{i}$ is likely to be greater than $\mathbf{h}_{j} \cdot \mathbf{h}_{i}$ with high probability.

\section{Experiment Details}
\subsection{Experimental Setup}
\subsubsection{Datasets.}
We conduct our experiments on the Amazon\footnote{\url{http://jmcauley.ucsd.edu/data/amazon/}} Sports, Amazon Beauty, Amazon Toys, and Yelp\footnote{\url{https://www.yelp.com/dataset}} datasets.

\begin{table}[]
    \centering
    \begin{tabular}{|c|c|c|c|}

    \hline
    \multirow{2}{*}{Dataset} & \multicolumn{3}{c|}{\textbf{User Interaction Sequence Length}} \\ \cline{2-4} 
    & \textbf{$<$ 8} & \textbf{[8, 20)} & \textbf{$\geq$ 20} \\ \hline
    \textbf{Sports} & 45.8798 & 48.4119 & 48.1976 \\
    \hline
    \textbf{Beauty} & 37.8374 & 42.1797 & 49.4946 \\
    \hline
    \textbf{Toys} & 25.6032 & 25.7388 & 27.3176 \\
    \hline
    \textbf{Yelp} & 34.4817 & 32.8905 & 30.1013 \\
    \hline
    \end{tabular}
    \caption{Average number of interactions per item within user interaction sequences of different lengths across each dataset.}
    \label{tab:interactions_per_item}
\end{table}

The statistics of the average number of interactions per item within user interaction sequences of different lengths across each dataset, which is mentioned in the main text, is shown in Table~\ref{tab:interactions_per_item}. The interactions per item in a user interaction sequence is calculated by the following:
\begin{align}
    \text{Interactions per Item within }\mathcal{S}^{u} = \frac{I_{\text{total}}^u}{|\mathcal{S}^{u}|},
\end{align}
where
$I_{\text{total}}^u$ denotes the \textbf{Total Number of Interactions} in the training data across all items in the user interaction sequence for a specific user $u$.
$|\mathcal{S}^{u}|$ represents the \textbf{Length of the User Interaction Sequence} for a specific user $u$.
\subsubsection{Baselines.} We compare our method, FENRec, with state-of-the-art sequential recommendation (SR) approaches, broadly divided into 3 categories:
\begin{itemize}
    \item \textbf{General sequential models}: 
    \textbf{Caser}~\cite{tang2018caser} utilize convolutional neural networks (CNNs) in SR. \textbf{GRURec} employs Recurrent Neural Networks (RNNs) 
    in SR. \textbf{SASRec}~\cite{kang2018sasrec} leverages the transformer-based model for SR.
    \textbf{LRURec}~\cite{yue2024lrurec} first introduces Linear Recurrent Units in SR to improve the efficiency of SR models. 
    \textbf{BSARec}~\cite{shin2024bsarec} introduces an attentive inductive bias using the Fourier transform to address the over-smoothing problem in the self-attention of the SR model.
    
    \item \textbf{Sequential models with self-supervised learning}: 
    \textbf{BERT4Rec}~\cite{sun2019bert4rec} uses BERT and the masked item prediction task for SR. \textbf{MAERec}~\cite{ye2023graph} introduces a Graph Masked Autoencoder to distill global item transition information for self-supervised augmentation adaptively. 
    Some works introduce contrastive learning.
    \textbf{CL4SRec}~\cite{xie2022cl4srec} uses data augmentations such as masking and cropping to generate augmented views for contrastive learning in SR. \textbf{CoSeRec}~\cite{liu2021coserec} further introduces two informative augmentation operators for contrastive learning. 
    \textbf{CBiT}~\cite{du2022CBiT} uses contrastive learning with a BERT architecture, employing cloze task and dropout masks to generate positive samples.
    \textbf{DuoRec}~\cite{qiu2022duorec} utilize model-level augmentation and regard user sequence with the same next item as augmented positive view. \textbf{ICLRec}~\cite{chen2022iclrec} employs intent contrastive learning and models latent user intents through clustering. \textbf{ICSRec}~\cite{qin2024icsrec} introduces intent contrastive learning within subsequences, utilizing coarse-grain and fine-grain intent contrastive learning methods. Note that CL4SRec, CoSeRec, ICLRec, DuoRec, and ICSRec all share the same backbone, SASRec.

    \item \textbf{Sequential models with label smoothness}: 
    \textbf{MVS}~\cite{zhou2023MVS} enhances SR models by introducing smoothness into data representation and model learning, using complementary models to enrich context and label representations. 
    Here, we use MVS with SASRec as the backbone model for the experiments.
\end{itemize}

\subsubsection{Implementation Details.}
The implementation details of our method, FENRec, are in the main text. For the baselines, we used public implementations for Caser\footnote{\url{https://github.com/graytowne/caser_pytorch}}, SASRec\footnote{\url{https://github.com/pmixer/SASRec.pytorch}}, and GRU4Rec\footnote{\url{https://github.com/yehjin-shin/BSARec}}.
For BERT4Rec, CL4SRec, MAERec, and DuoRec, we utilize the implementation provided by SSLRec\footnote{\url{https://github.com/HKUDS/SSLRec}}~\cite{ren2024sslrec}. 
For MVS, ICLRec, ICSRec, LRURec, BSARec, and CoSeRec, we employ the code provided by the author. We configure
the embedding dimension to 64. The maximum user
sequence length is set to 50 across all methods. Shorter sequences are padded and longer ones are truncated. We adjust the batch size to 256, although for the MVS system, we use a reduced batch size of 64 for the Sports and Yelp datasets during training due to memory constraints. All other hyper-parameters for each baseline are set following the suggestions in the original papers, and we report each baseline’s performance under its optimal settings. We conduct the experiments using an NVIDIA RTX 3090 GPU, and the code implementations are in PyTorch.

\begin{table*}[t]
    \centering
    \footnotesize
    \resizebox{\textwidth}{!}{%
    \begin{tabular}{ll| cccccccc}
    \specialrule{1pt}{1pt}{2pt}
    Dataset & Metric & GRU4Rec & Caser  & LRURec & SASRec  & BSARec & BERT4Rec & MAERec & CBiT    \\
    \specialrule{1pt}{1pt}{2pt}

\multirow{6}{*}{Sports} 
 & HIT@5 & 0.0116 ± 0.0012 & 0.0123 ± 0.0007 & 0.0389 ± 0.0014 & 0.0189 ± 0.0007 & \underline{0.0400 ± 0.0005} & 0.0264 ± 0.0007 & 0.0285 ± 0.0004 & 0.0235 ± 0.0013 \\
& HIT@10 & 0.0197 ± 0.0013 & 0.0210 ± 0.0006 & 0.0551 ± 0.0014 & 0.0307 ± 0.0014 & \underline{0.0583 ± 0.0005} & 0.0408 ± 0.0006 & 0.0435 ± 0.0003 & 0.0365 ± 0.0007 \\
& HIT@20 & 0.0320 ± 0.0017 & 0.0336 ± 0.0014 & 0.0771 ± 0.0006 & 0.0491 ± 0.0022 & \underline{0.0830 ± 0.0005} & 0.0622 ± 0.0008 & 0.0645 ± 0.0008 & 0.0528 ± 0.0010 \\
& NDCG@5 & 0.0074 ± 0.0009 & 0.0078 ± 0.0005 & 0.0276 ± 0.0011 & 0.0122 ± 0.0004 & \underline{0.0280 ± 0.0005} & 0.0175 ± 0.0005 & 0.0191 ± 0.0003 & 0.0157 ± 0.0005 \\
& NDCG@10 & 0.0100 ± 0.0010 & 0.0105 ± 0.0003 & 0.0329 ± 0.0011 & 0.0161 ± 0.0007 & \underline{0.0339 ± 0.0004} & 0.0221 ± 0.0005 & 0.0239 ± 0.0003 & 0.0198 ± 0.0003 \\
& NDCG@20 & 0.0131 ± 0.0011 & 0.0137 ± 0.0002 & 0.0384 ± 0.0008 & 0.0207 ± 0.0008 & \underline{0.0401 ± 0.0005} & 0.0275 ± 0.0004 & 0.0292 ± 0.0002 & 0.0239 ± 0.0003 \\

\specialrule{0.5pt}{1pt}{2pt} 

\multirow{6}{*}{Beauty} 
& HIT@5 & 0.0188 ± 0.0026 & 0.0234 ± 0.0004 & 0.0671 ± 0.0014 & 0.0359 ± 0.0007 & \underline{0.0707 ± 0.0002} & 0.0489 ± 0.0018 & 0.0557 ± 0.0013 & 0.0612 ± 0.0015 \\
& HIT@10 & 0.0315 ± 0.0034 & 0.0386 ± 0.0002 & 0.0928 ± 0.0014 & 0.0580 ± 0.0012 & \underline{0.0978 ± 0.0009} & 0.0735 ± 0.0027 & 0.0789 ± 0.0025 & 0.0871 ± 0.0029 \\
& HIT@20 & 0.0516 ± 0.0055 & 0.0585 ± 0.0009 & 0.1257 ± 0.0015 & 0.0905 ± 0.0030 & \underline{0.1345 ± 0.0020} & 0.1065 ± 0.0029 & 0.1094 ± 0.0019 & 0.1202 ± 0.0028 \\
& NDCG@5 & 0.0114 ± 0.0016 & 0.0148 ± 0.0002 & 0.0481 ± 0.0010 & 0.0233 ± 0.0007 & \underline{0.0503 ± 0.0001} & 0.0330 ± 0.0017 & 0.0397 ± 0.0009 & 0.0435 ± 0.0013 \\
& NDCG@10 & 0.0154 ± 0.0019 & 0.0197 ± 0.0002 & 0.0564 ± 0.0010 & 0.0304 ± 0.0005 & \underline{0.0590 ± 0.0003} & 0.0409 ± 0.0015 & 0.0472 ± 0.0011 & 0.0518 ± 0.0017 \\
& NDCG@20 & 0.0205 ± 0.0024 & 0.0248 ± 0.0001 & 0.0647 ± 0.0010 & 0.0385 ± 0.0009 & \underline{0.0682 ± 0.0006} & 0.0492 ± 0.0015 & 0.0548 ± 0.0012 & 0.0602 ± 0.0017 \\
\specialrule{0.5pt}{1pt}{2pt} 

\multirow{6}{*}{Toys} 
 & HIT@5 & 0.0164 ± 0.0017 & 0.0180 ± 0.0004 & 0.0707 ± 0.0014 & 0.0481 ± 0.0019 & \underline{0.0792 ± 0.0019} & 0.0476 ± 0.0012 & 0.0589 ± 0.0009 & 0.0632 ± 0.0006 \\
& HIT@10 & 0.0277 ± 0.0027 & 0.0277 ± 0.0006 & 0.0941 ± 0.0005 & 0.0699 ± 0.0015 & \underline{0.1066 ± 0.0015} & 0.0690 ± 0.0025 & 0.0823 ± 0.0011 & 0.0865 ± 0.0006 \\
& HIT@20 & 0.0461 ± 0.0037 & 0.0421 ± 0.0009 & 0.1228 ± 0.0001 & 0.0982 ± 0.0011 & \underline{0.1405 ± 0.0018} & 0.0974 ± 0.0024 & 0.1108 ± 0.0007 & 0.1166 ± 0.0006 \\
& NDCG@5 & 0.0104 ± 0.0011 & 0.0117 ± 0.0006 & 0.0523 ± 0.0008 & 0.0326 ± 0.0015 & \underline{0.0574 ± 0.0013} & 0.0332 ± 0.0013 & 0.0424 ± 0.0006 & 0.0453 ± 0.0005 \\
& NDCG@10 & 0.0140 ± 0.0014 & 0.0149 ± 0.0004 & 0.0598 ± 0.0004 & 0.0396 ± 0.0012 & \underline{0.0662 ± 0.0013} & 0.0401 ± 0.0017 & 0.0499 ± 0.0007 & 0.0529 ± 0.0005 \\
& NDCG@20 & 0.0187 ± 0.0017 & 0.0185 ± 0.0005 & 0.0671 ± 0.0004 & 0.0468 ± 0.0010 & \underline{0.0747 ± 0.0013} & 0.0472 ± 0.0017 & 0.0570 ± 0.0006 & 0.0605 ± 0.0005 \\

\specialrule{0.5pt}{1pt}{2pt} 

\multirow{6}{*}{Yelp} 
& HIT@5 & 0.0129 ± 0.0008 & 0.0137 ± 0.0001 & 0.0240 ± 0.0003 & 0.0147 ± 0.0006 & 0.0252 ± 0.0008 & 0.0215 ± 0.0004 & 0.0255 ± 0.0002 & 0.0164 ± 0.0006 \\
& HIT@10 & 0.0227 ± 0.0017 & 0.0246 ± 0.0003 & 0.0410 ± 0.0005 & 0.0254 ± 0.0009 & \underline{0.0432 ± 0.0017} & 0.0361 ± 0.0009 & 0.0423 ± 0.0003 & 0.0281 ± 0.0015 \\
& HIT@20 & 0.0386 ± 0.0022 & 0.0419 ± 0.0007 & 0.0652 ± 0.0006 & 0.0418 ± 0.0014 & \underline{0.0704 ± 0.0021} & 0.0608 ± 0.0016 & 0.0687 ± 0.0016 & 0.0474 ± 0.0010 \\
& NDCG@5 & 0.0082 ± 0.0006 & 0.0086 ± 0.0002 & 0.0152 ± 0.0002 & 0.0091 ± 0.0003 & 0.0159 ± 0.0003 & 0.0134 ± 0.0004 & 0.0162 ± 0.0003 & 0.0102 ± 0.0004 \\
& NDCG@10 & 0.0113 ± 0.0008 & 0.0120 ± 0.0001 & 0.0207 ± 0.0003 & 0.0125 ± 0.0003 & 0.0217 ± 0.0006 & 0.0181 ± 0.0006 & 0.0216 ± 0.0003 & 0.0140 ± 0.0007 \\
& NDCG@20 & 0.0153 ± 0.0009 & 0.0164 ± 0.0003 & 0.0267 ± 0.0004 & 0.0166 ± 0.0004 & 0.0285 ± 0.0007 & 0.0243 ± 0.0008 & 0.0282 ± 0.0005 & 0.0188 ± 0.0006 \\

\specialrule{1pt}{1pt}{1pt} 

    \end{tabular}%
    }
    \resizebox{\textwidth}{!}{%
    \begin{tabular}{ll| cccccccc}
    \specialrule{1pt}{1pt}{2pt}
    Dataset & Metric & CL4SRec & CoSeRec & ICLRec & DuoRec & ICSRec & MVS & FENRec & Improv.   \\
    \specialrule{1pt}{1pt}{2pt}

\multirow{6}{*}{Sports} 
 & HIT@5 & 0.0235 ± 0.0003 & 0.0264 ± 0.0002 & 0.0271 ± 0.0005 & 0.0311 ± 0.0003 & 0.0388 ± 0.0005 & 0.0384 ± 0.0006 & \textbf{0.0431 ± 0.0008}\textsuperscript{*} & 7.75\% \\
& HIT@10 & 0.0375 ± 0.0003 & 0.0403 ± 0.0003 & 0.0422 ± 0.0006 & 0.0446 ± 0.0006 & 0.0551 ± 0.0009 & 0.0548 ± 0.0002 & \textbf{0.0621 ± 0.0005}\textsuperscript{*} & 6.52\%\\
& HIT@20 & 0.0575 ± 0.0015 & 0.0605 ± 0.0009 & 0.0632 ± 0.0012 & 0.0640 ± 0.0002 & 0.0767 ± 0.0006 & 0.0775 ± 0.0002 & \textbf{0.0890 ± 0.0007}\textsuperscript{*} & 7.23\%\\
& NDCG@5 & 0.0156 ± 0.0002 & 0.0177 ± 0.0002 & 0.0179 ± 0.0006 & 0.0220 ± 0.0005 & 0.0272 ± 0.0006 & 0.0268 ± 0.0004 & \textbf{0.0299 ± 0.0008}\textsuperscript{*} & 6.79\% \\
& NDCG@10 & 0.0201 ± 0.0001 & 0.0221 ± 0.0001 & 0.0227 ± 0.0007 & 0.0263 ± 0.0002 & 0.0324 ± 0.0006 & 0.0321 ± 0.0003 & \textbf{0.0361 ± 0.0006}\textsuperscript{*} & 6.49\% \\
& NDCG@20 & 0.0251 ± 0.0003 & 0.0272 ± 0.0003 & 0.0280 ± 0.0008 & 0.0312 ± 0.0003 & 0.0379 ± 0.0005 & 0.0378 ± 0.0003 & \textbf{0.0429 ± 0.0007}\textsuperscript{*} & 6.98\% \\

\specialrule{0.5pt}{1pt}{2pt} 

\multirow{6}{*}{Beauty} 
& HIT@5 & 0.0492 ± 0.0006 & 0.0459 ± 0.0020 & 0.0493 ± 0.0016 & 0.0560 ± 0.0005 & 0.0681 ± 0.0010 & 0.0691 ± 0.0008 & \textbf{0.0728 ± 0.0008}\textsuperscript{*} & 2.97\% \\
& HIT@10 & 0.0706 ± 0.0016 & 0.0696 ± 0.0017 & 0.0726 ± 0.0015 & 0.0800 ± 0.0011 & 0.0936 ± 0.0013 & 0.0961 ± 0.0006 & \textbf{0.1019 ± 0.0007}\textsuperscript{*} & 4.19\% \\
& HIT@20 & 0.0990 ± 0.0013 & 0.1020 ± 0.0007 & 0.1055 ± 0.0029 & 0.1088 ± 0.0012 & 0.1273 ± 0.0002 & 0.1305 ± 0.0010 & \textbf{0.1393 ± 0.0030} & 3.57\% \\
& NDCG@5 & 0.0348 ± 0.0005 & 0.0301 ± 0.0013 & 0.0325 ± 0.0008 & 0.0406 ± 0.0004 & 0.0487 ± 0.0004 & 0.0494 ± 0.0002 & \textbf{0.0514 ± 0.0006}\textsuperscript{*} & 2.19\% \\
& NDCG@10 & 0.0417 ± 0.0008 & 0.0378 ± 0.0012 & 0.0400 ± 0.0008 & 0.0483 ± 0.0005 & 0.0569 ± 0.0008 & 0.0581 ± 0.0001 & \textbf{0.0608 ± 0.0005}\textsuperscript{*} & 3.05\% \\
& NDCG@20 & 0.0488 ± 0.0008 & 0.0460 ± 0.0009 & 0.0483 ± 0.0012 & 0.0555 ± 0.0006 & 0.0654 ± 0.0005 & 0.0667 ± 0.0002 & \textbf{0.0702 ± 0.0011} & 2.93\% \\
\specialrule{0.5pt}{1pt}{2pt} 

\multirow{6}{*}{Toys} 
 & HIT@5 & 0.0630 ± 0.0009 & 0.0576 ± 0.0008 & 0.0576 ± 0.0008 & 0.0609 ± 0.0007 & 0.0776 ± 0.0009 & 0.0748 ± 0.0010 & \textbf{0.0818 ± 0.0010} & 3.28\% \\
& HIT@10 & 0.0863 ± 0.0001 & 0.0818 ± 0.0014 & 0.0826 ± 0.0020 & 0.0816 ± 0.0008 & 0.1035 ± 0.0012 & 0.1008 ± 0.0013 & \textbf{0.1109 ± 0.0007}\textsuperscript{*} & 4.03\% \\
& HIT@20 & 0.1143 ± 0.0000 & 0.1121 ± 0.0006 & 0.1137 ± 0.0027 & 0.1080 ± 0.0008 & 0.1355 ± 0.0015 & 0.1323 ± 0.0030 & \textbf{0.1462 ± 0.0010}\textsuperscript{*} & 4.06\% \\
& NDCG@5 & 0.0447 ± 0.0004 & 0.0399 ± 0.0003 & 0.0393 ± 0.0005 & 0.0449 ± 0.0003 & 0.0566 ± 0.0003 & 0.0547 ± 0.0004 & \textbf{0.0592 ± 0.0004} & 3.14\% \\
& NDCG@10 & 0.0522 ± 0.0002 & 0.0477 ± 0.0004 & 0.0473 ± 0.0007 & 0.0515 ± 0.0003 & 0.0650 ± 0.0005 & 0.0631 ± 0.0005 & \textbf{0.0686 ± 0.0003}\textsuperscript{*} & 3.63\% \\
& NDCG@20 & 0.0592 ± 0.0002 & 0.0553 ± 0.0001 & 0.0552 ± 0.0012 & 0.0582 ± 0.0003 & 0.0731 ± 0.0005 & 0.0710 ± 0.0009 & \textbf{0.0775 ± 0.0002}\textsuperscript{*} & 3.75\% \\

\specialrule{0.5pt}{1pt}{2pt} 

\multirow{6}{*}{Yelp} 
& HIT@5 & 0.0238 ± 0.0005 & 0.0221 ± 0.0003 & 0.0232 ± 0.0005 & 0.0236 ± 0.0002 & \underline{0.0260 ± 0.0005} & 0.0243 ± 0.0004 & \textbf{0.0286 ± 0.0009}\textsuperscript{*} & 10.00\% \\
& HIT@10 & 0.0404 ± 0.0014 & 0.0375 ± 0.0008 & 0.0394 ± 0.0006 & 0.0402 ± 0.0011 & 0.0431 ± 0.0003 & 0.0409 ± 0.0007 & \textbf{0.0485 ± 0.0008}\textsuperscript{*} & 12.27\% \\
& HIT@20 & 0.0655 ± 0.0012 & 0.0618 ± 0.0004 & 0.0645 ± 0.0013 & 0.0663 ± 0.0018 & 0.0700 ± 0.0002 & 0.0654 ± 0.0001 & \textbf{0.0776 ± 0.0020}\textsuperscript{*} & 10.23\% \\
& NDCG@5 & 0.0150 ± 0.0003 & 0.0141 ± 0.0002 & 0.0147 ± 0.0004 & 0.0150 ± 0.0002 & \underline{0.0165 ± 0.0002} & 0.0156 ± 0.0003 & \textbf{0.0182 ± 0.0003}\textsuperscript{*} & 10.30\% \\
& NDCG@10 & 0.0204 ± 0.0006 & 0.0190 ± 0.0003 & 0.0198 ± 0.0001 & 0.0202 ± 0.0004 & \underline{0.0220 ± 0.0002} & 0.0210 ± 0.0004 & \textbf{0.0246 ± 0.0002}\textsuperscript{*} & 11.82\% \\
& NDCG@20 & 0.0266 ± 0.0006 & 0.0251 ± 0.0003 & 0.0262 ± 0.0003 & 0.0268 ± 0.0006 & \underline{0.0288 ± 0.0002} & 0.0271 ± 0.0002 & \textbf{0.0319 ± 0.0005}\textsuperscript{*} & 10.76\% \\

\specialrule{1pt}{1pt}{1pt} 

    \end{tabular}%
    }
    \caption{Performance comparison of different methods on 4 datasets with standard deviations. The best results are in boldface and the second-best results
    are underlined. `Improv.' indicates the relative improvement against the best baseline performance. `*' denotes the significance p-value $<$ 0.05 compared with the best baseline.}
    \label{tab:main_exp1}
\end{table*}

\subsection{Comparison to SOTA}

Table~\ref{tab:main_exp1} presents a performance comparison of different methods across four datasets, including standard deviations. \textit{In addition to the information provided in the main text, we also provide the standard deviations here.} Due to space constraints, the table has been placed at the end of this document. Experimental results demonstrate that our method, FENRec, achieves significant improvements with the p-value $<$ 0.05 across most metrics, highlighting its effectiveness.

\subsection{Hyperparameter Tuning}


\begin{figure}[htbp]
    \centering
    \begin{minipage}[b]{0.45\columnwidth}
        \centering
        \includegraphics[width=\linewidth]{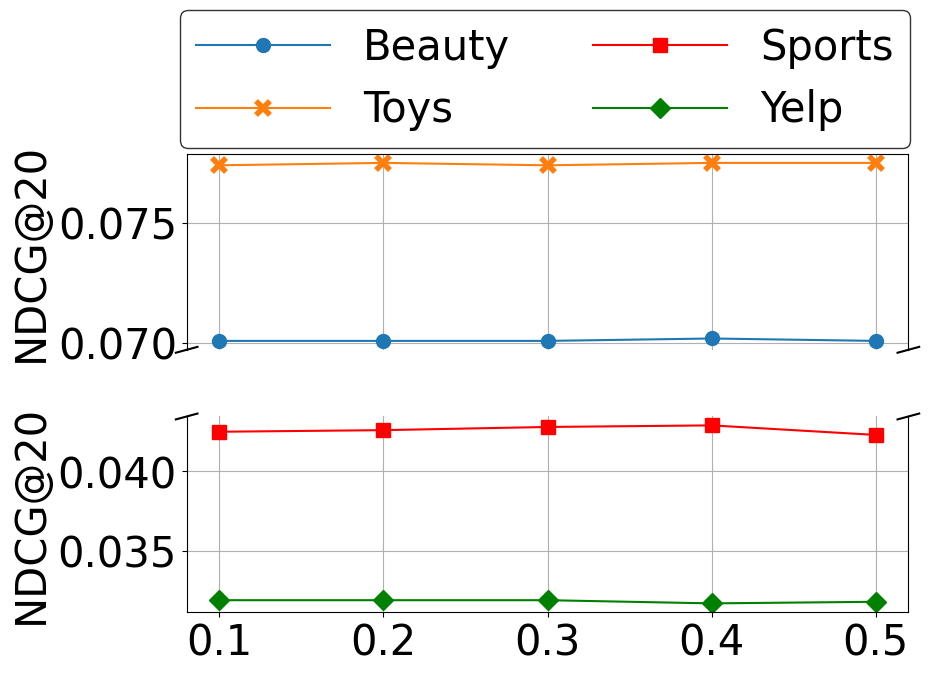}
        \subcaption{$\gamma$}\label{fig:tuning_lambda}
    \end{minipage}
    \hfill
    \begin{minipage}[b]{0.45\columnwidth}
        \centering
        \includegraphics[width=\linewidth]{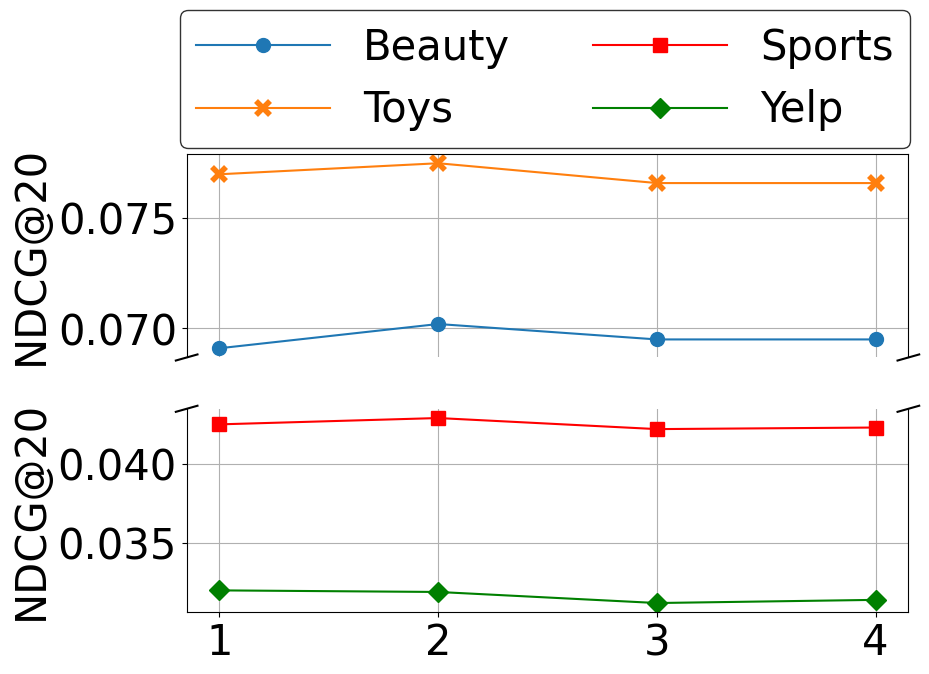}
        \subcaption{Future items}\label{fig:tuning_label_len}
    \end{minipage}
    \caption{(a): Performance of FENRec w.r.t. different $\lambda$ values. (b): Performance of FENRec w.r.t. the number of future items considered beyond the immediate next item.}
    \label{fig:tuning_app}
\end{figure}

Figure~\ref{fig:tuning_lambda} shows the results of tuning $\lambda$, a hyperparameter between 0 and 1, that controls the proportion of the anchor sample in the generated enduring hard negatives. The results show that the improvement of our method, FENRec, is relatively insensitive to changes in $\lambda$. This is likely due to the incorporation of time-dependent soft labeling, which mitigates the potential risk of misclassifying genuinely appealing items as irrelevant, a risk that may increase with higher $\lambda$ values. By effectively capturing and representing the ambiguity in user interest, time-dependent soft labeling ensures that the model maintains stable performance as $\lambda$ increases. The combined effects of enduring hard negatives and time-dependent soft labeling support the robustness and effectiveness of our method across different $\lambda$ settings. 

To effectively leverage future interactions without introducing excessive noise, we designed time-dependent soft labels that incorporate two future items beyond the immediate next item. This approach ensures a balance between capturing meaningful user intent while minimizing the influence of less relevant, distant interactions.
To validate this design, we conducted hyperparameter tuning on the Sports, Beauty, Toys, and Yelp datasets, varying the number of future items included. Figure~\ref{fig:tuning_label_len} illustrates the performance variations, where the x-axis represents the number of future items considered beyond the immediate next item (i.e., excluding the immediate next item), and the y-axis denotes NDCG@20. As shown in Figure~\ref{fig:tuning_label_len}, the model achieved the best performance with two future items on the Beauty, Sports, and Toys datasets, while performing second-best on the Yelp dataset. The results demonstrate that this configuration effectively captures relevant future interactions without introducing unnecessary noise.

\clearpage
\clearpage

\end{appendices}

\end{document}